\newtheorem{thm}{Theorem}[section]
\newtheorem{lem}[thm]{Lemma}
\newtheorem{example}[thm]{Example}
\newtheorem{pro}[thm]{Proposition}
\newtheorem{rmk}[thm]{Remark}
\newcommand{\be }{\begin{equation}}
\newcommand{\ee }{\end{equation}}
\newcommand{\frkg}{\mathfrak g}
\newcommand{\br}[1]{   [ \cdot,    \cdot  ]   }
\newcommand{\g}{\mathfrak g}
\newcommand{\Aut}{\mathrm{Aut}}
\DeclareMathOperator{\tder}{tder}
\DeclareMathOperator{\der}{der}
\DeclareMathOperator{\sder}{sder}
\DeclareMathOperator{\TAut}{TAut}
\DeclareMathOperator{\SAut}{SAut}
\DeclareMathOperator{\Lie}{Lie}
\begin{document}

\title{Chern-Simons, Wess-Zumino and other cocycles from Kashiwara-Vergne and associators}

\keywords{Chern-Simons form, Kashiwara-Vergne theory, Associators, Kontsevich's non-commutative differential calculus}
\subjclass[2000]{17B01, 55R40}

\author{Anton Alekseev}
\address{Section of
  Mathematics, University of Geneva, 2-4 rue du Li\`evre, c.p. 64, 1211 Gen\`eve 4, Switzerland}
\email{Anton.Alekseev@unige.ch}

\author{Florian Naef}
\address{Section of
  Mathematics, University of Geneva, 2-4 rue du Li\`evre, c.p. 64, 1211 Gen\`eve 4, Switzerland}
\email{Florian.Naef@unige.ch}

\author{Xiaomeng Xu}
\address{Department of mathematics, Massachusetts institute of technology, Cambridge, MA 02139, United States of America}
\email{xxu@mit.edu}

\author{Chenchang Zhu}
\address{Mathematics Institute,
Georg-August-University
G\"ottingen, Bunsenstrasse 3-5, 37073, G\"ottingen, Germany}
\email{czhu@gwdg.de}

\maketitle

\hskip 8cm
{\em To the memory of Ludwig Faddeev}

\begin{abstract}
Descent equations play an important role in the theory of characteristic classes and find applications in theoretical physics, {\em e.g} in the Chern-Simons field theory and in the theory of anomalies.  The second Chern class (the first Pontrjagin class) is defined as $p= \langle F, F\rangle$ where $F$ is the curvature 2-form and $\langle \cdot, \cdot\rangle$ is an invariant scalar product on the corresponding Lie algebra $\g$. The descent for $p$ gives rise to an element $\omega=\omega_3 + \omega_2 + \omega_1 + \omega_0$ of mixed degree. The 3-form part $\omega_3$ is the Chern-Simons form. The 2-form part $\omega_2$ is known as the Wess-Zumino action in physics. The 1-form component $\omega_1$ is related to the canonical central extension of the loop group $LG$. 

In this paper, we give a new interpretation of the low degree components $\omega_1$ and $\omega_0$. Our main tool is the universal differential calculus on free Lie algebras due to Kontsevich. We establish a correspondence between  solutions of the first Kashiwara-Vergne equation in Lie theory and universal solutions of the descent equation for the second Chern class $p$. In more detail, we define a 1-cocycle $C$ which maps automorphisms of the free Lie algebra to one forms. A solution of the Kashiwara-Vergne equation $F$ is mapped to $\omega_1=C(F)$. Furthermore, the component $\omega_0$ is related to the associator corresponding to $F$. It is surprising that while $F$ and $\Phi$ satisfy the highly non-linear twist and pentagon equations, the elements $\omega_1$ and $\omega_0$ solve the linear descent equation.
\end{abstract}

\tableofcontents

\section{Motivation: descent equations}
\label{sect:intro}

Le $G$ be a connected Lie group with Lie algebra $\frkg$, and let $P \to M$ be a principal $G$-bundle with connection $A \in \Omega^1(P, \frkg)$. Without loss of generality, one can assume that $G$ is a matrix Lie group (by Ado Theorem, $G$ always admits a faithful representation). Then, gauge transformations can be written in the form
\begin{equation} \label{gauge}
A \mapsto A^g = g^{-1} A g + g^{-1} dg,
\end{equation}
and the curvature of $A$ is defined by
$$
F=dA + \frac{1}{2} [A, A] = dA + A^2.
$$

Polynomials of $A$ and $F$ form the Weil algebra $W\frkg$ (for details see {\em e.g.} \cite{Guillemin-Sternberg}). The defining equation of the curvature $F$ and the Bianchi identity give rise to  the definition of the Weil differential:
$$
d_W A = F - \frac{1}{2} [A, A], \hskip 0.3cm
d_W F = - [A, F].
$$

Elements of $W\frkg$ basic under the $G$-action (that is, $G$-invariant and horizontal) give rise to differential forms on the total space of the bundle $P$ which descend to the base $M$. Moreover, it turns out that such forms are automatically closed and gauge invariant. Their cohomology classes in $H^\bullet (M, \mathbb{R})$ are characteristic classes of the bundle $P$.

Assume that the Lie algebra $\frkg$ carries an invariant scalar product $\langle \cdot, \cdot \rangle$. Then, the element of $W\frkg$ 
$$
p = \langle F, F \rangle
$$
is basic. Its cohomology class is the second Chern class (or the first Pontrjagin class) of $P$ if we choose a rescaled inner product with a suitable coefficient. In $W\frkg$, the element $p$ admits a primitive:
$$
p = d \, {\rm CS},
$$
where ${\rm CS} \in W\frkg$ is the Chern-Simons form
\begin{equation} \label{cs}
{\rm CS} = \langle A, dA \rangle + \frac{1}{3} \langle A, [A, A] \rangle.
\end{equation}
Note that $p$ does not admit a basic primitive within $W\frkg$. This is why the corresponding characteristic class is non vanishing in general. However, if one removes the requirement of being basic, the Weil algebra is acyclic and all closed elements of non vanishing degree admit primitives.

The Chern-Simons form \eqref{cs} plays a major role in applications to Quantum Topology and  Quantum Field Theory (QFT). In this paper, we will study its version in the complex
\begin{equation} \label{C^n}
\mathcal{C}^n = \oplus_{i+j+k=n} \, W^i\frkg \otimes \Omega^j(G^k).
\end{equation}
Here $W^i\frkg$ is the subspace of $W\frkg$ spanned by the elements of degree $i$, and $G^k$ is a direct product of $k$ copies of $G$. The graded vector space $C = \oplus _n C^n$ admits a differential
$$
D = d_W+ d_{dR} + \Delta,
$$
where $d_W$ is the Weil differential on $W\frkg$, $d_{dR}$ is the de Rham differential on $\Omega(G^k)$, $\operatorname{deg}_\text{dR}$ the de Rham degree $i+j$, and $\Delta$ is the group cohomology differential given by
$$
(\Delta \alpha)(A, g_1, \dots, g_{k+1}) =
\alpha(A^{g_1}, g_2, \dots, g_{k+1}) - 
\alpha(A, g_1g_2, \dots, g_{k+1}) + \dots +
(-1)^{k+1} \alpha(A, g_1, \dots, g_k).
$$
Here $\alpha \in W\frkg \otimes \Omega(G^k)$, and the notation $\alpha(A, g_1g_2, \dots, g_{k+1})$ stands for the pull-back of $\alpha$ under the co-face map $G^{k+1} \to G^k$ defined by 
$$
(g_1, g_2, g_3, \dots, g_{k+1}) \mapsto (g_1g_2, g_3, \dots, g_{k+1}).
$$

In the complex $C$, we can ask again for the primitive of the element $p$. That is, for a solution of the equation $p=D\omega$. Solutions of this equation serve to define the Chern-Simons functional on simplicial spaces, and also in the theory of anomalies in QFT (see \cite{Faddeev-Shatashvili}, \cite{Faddeev}, \cite{Zumino}, \cite{Jackiw}).

If we denote $d=d_W + d_{dR}$, we obtain a system of equations known as descent equations:
\begin{equation} \label{eq:descend-omega}
\begin{array}{rll}
d \omega_3 & = & p, \\
-\Delta \omega_3 + d \omega_2 & = & 0, \\
\Delta \omega_2 + d \omega_1 & = & 0, \\
-\Delta \omega_1 + d \omega_0 & = & 0, \\
\Delta \omega_0 & = & 0.
\end{array}
\end{equation}
Here $\omega_i$ is an element of  $W\frkg \otimes \Omega(G^{3-i})$ of degree $i$ (the sum of the Weil degree in $W\frkg$ and the de Rham degree in $\Omega(G^{3-i})$. In particular, $\omega_3 \in W\frkg$ and the first equation reads
$$
d_W \omega_3 = p.
$$
Hence, we conclude that $\omega_3 = {\rm CS}$. In order to understand the next equation, we write
$$
\Delta \omega_3 = {\rm CS}(A^g) - {\rm CS}(A) =
d \langle A, g^{-1}dg \rangle - \frac{1}{6} \langle g^{-1}dg, [g^{-1} dg, g^{-1}dg] \rangle.
$$
Finding the primitive of this expression under the differential $d$ depends on vanishing of the cohomology class of the Cartan 3-form on $G$:
$$
\eta= \frac{1}{6} \langle g^{-1}dg, [g^{-1} dg, g^{-1}dg] \rangle.
$$
The cohomology class $[\eta]$ is non vanishing in general. In particular, it is non vanishing on all semisimple compact Lie groups. Hence, in this case descent equations admit no solution. There are several ways to address this difficulty. For QFT applications, one should replace differential forms on $G$ by differential characters, and consider exponentials of periods of these forms. 


In a geometric setting, one realises $p$ on a specific $G$-principal bundle $P$ over a manifold $X$. The geometric data corresponding to $p$ is the Chern-Simon 2-gerbe. When $p(P)$ vanishes, one is expected to find trivialisations of the Chern-Simon 2-gerbe, which are known as String structures over $P$ \cite{Stolz-Teichner}. As shown in \cite{Sheng-Xu-Zhu}, descent equations \eqref{eq:descend-omega} govern the connection data of String principal 2-bundles.  

In this paper, we choose another approach: we replace the group $G$ by the corresponding formal group $G_{\rm formal}$. Since formal manifolds are modeled on one chart (which is itself a formal vector space), the cohomology of $G_{\rm formal}$ is trivial and the cohomology class of the corresponding Cartan 3-form vanishes. In more detail, the exponential map
$$
\exp: \frkg_{\rm formal} \to G_{\rm formal}
$$
establishes an isomorphism between $G_{\rm formal}$ and its Lie algebra. The primitive of the Cartan 3-form can be computed using the Poincar\'e homotopy operator $h_P$:
$$
\eta = d h_P(\eta).
$$

In order to make this approach more precise, we  will use a version of the Kontsevich universal differential calculus. This leads to an (almost) unique solution for the component $\omega_2$ of the extended Chern-Simons form which is called the Wess-Zumino action in the physics literature:
$$
{\rm WZ}(A,g) = \langle A, dgg^{-1} \rangle -\frac{1}{6} \, h_P(\langle g^{-1}dg, [g^{-1} dg, g^{-1}dg]\rangle).
$$

The goal of this paper is to give an interpretation of the components $\omega_1$  and $\omega_0$ of the extended Chern-Simons form. In more detail, we will define a 1-cocycle $C$ mapping the group of tangential automorphisms of the free Lie algebra with $n$ generators ${\rm TAut}_n$ (see the next sections for a precise definition) to the space of universal 1-forms $\Omega^1(G_{\rm formal}^n)$. We will then establish the following surprising relation between certain special elements in ${\rm TAut}_2$ and ${\rm TAut}_3$ and universal solutions of descent equations.

It turns out that the equation $\delta \omega_2 + d \omega_1 = 0$, with $\omega_2$ given by the Wess-Zumino action, admits solutions of the form $\omega_1=C(g)$, where $g\in {\rm TAut}_2$ with the property
$$
g(x_1 + x_2)=\log(e^{x_1}e^{x_2}).
$$
Here the right hand side is the Baker-Campbell-Hausdorff series. Furthermore, the next descent equation $\Delta \omega_1 + d \omega_0 =0$ translates into the twist equation from the theory of quasi-Hopf algebras:
$$
g^{1,2}g^{12,3}=g^{2,3}g^{1,23} \Phi^{1,2,3},
$$
where $g^{1,2}$, $g^{12,3}$, etc. are images of $g$ under various co-face maps, and $\Phi \in {\rm TAut}_3$ is related to $\omega_0$ via $\omega_0=(g^{2,3} g^{1,23}).h_P (C(\Phi))$. Last but not least, the descent equation $\Delta \omega_0=0$ translates into the pentagon equation for $\Phi$:
$$
\Phi^{12,3,4}\Phi^{1,2,34} = \Phi^{2,3,4} \Phi^{1,23,4} \Phi^{1,2,3}.
$$

The structure of the paper is as follows:
the second chapter contains a recollection of non-commutative differential calculus and the construction of the universal Bott-Shulman double complex (\cite{Bott}, \cite{BSS}, \cite{BS})  in the abelian and non-abelian cases. In particular, the relevant cohomology groups are computed and their relation with the classical Bott-Shulman complex is discussed.
The third chapter contains a short discussion of the Kashiwara-Vergne theory and the construction of the 1-cocycle $C$ which is the main new object in this note. After establishing some of its properties, we prove the formulas for $\omega_1$ and $\omega_0$.

\subsection*{Acknowledgement}
The authors would like to thank Jürg Fröhlich, Krzysztof Gawedzki, Samson Shatashvili, and Pavol Ševera for helpful discussions and suggestions.
A.A. and F.N. were supported by the grant MODFLAT of the European Research Council and the NCCR SwissMAP of the Swiss National Science Foundation (SNSF). A.A. was supported by the grants number 165666 and 159581 of the SNSF. X.X. was supported by the Early Postdoc.Mobility grant of SNSF. C.Z. was supported by the Deutsche Forschungsgemeinschaft (DFG) through the Institutional Strategy of the University of G\"ottingen and DFG Individual Grant (ZH 274/1-1) “Homotopy Lie Theory”.


\section{Universal calculus for gauge theory}

\subsection{Non-commutative differential calculus}

Let $\Bbbk$ be a field of characteristic zero, and let ${\rm Lie}(x_1, \dots, x_n)$ be a free graded Lie superalgebra with generators $x_1, \dots, x_n$. We assume that the degrees $|x_i|$ are non-negative for all $i$. If all the degrees vanish, we use the shorthand notation ${\rm Lie}_n$ instead of ${\rm Lie}(x_1, x_2, \dots, x_n)$. One considers ${\rm Lie}(x_1, x_2, \dots, x_n)$ as a coordinate algebra of a non-commutative space. Note that this free Lie algebra carries two gradings: the first one is induced by the degrees of generators, the other one is obtained by counting the number of letters in a Lie word.

Following Drinfeld and Kontsevich, we define functions on this non-commutative space in the following way:
$$
\begin{array}{lll}
\mathcal{F}\langle x_1, \dots, x_n\rangle &= & {\rm Lie}(x_1, \dots, x_n) \otimes {\rm Lie}(x_1, \dots, x_n)/ \\
& & {\rm Span}\{ a \otimes b - (-1)^{|a||b|} b\otimes a, \,\, a\otimes [b,c] - [a,b] \otimes c \} .
\end{array}
$$
The natural projection is denoted by $\langle \cdot, \cdot \rangle$.
Again, we use the notation $\mathcal{F}_n$ when all generators have vanishing degree.

\begin{example}
The space $\mathcal{F}_1$ is a line with generator $\langle x_1, x_1 \rangle$.
The space $\mathcal{F}_2$ is infinite dimensional. In low degrees, we have the following elements which contain two letters: $\langle x_1, x_1\rangle$, $\langle x_2, x_2\rangle$ and $\langle x_1, x_2\rangle$. The space $\mathcal{F}_3$ contains a unique line of multilinear elements spanned by $\langle x_1, [x_2, x_3] \rangle$. 

\end{example}

The space of differential forms is defined as functions on the shifted tangent bundle. More concretely,
$$
\Omega\langle x_1, \dots, x_n\rangle = \mathcal{F}\langle x_1, \dots, x_n, dx_1, \dots, dx_n\rangle,
$$
where we have $|dx_i| = |x_i| +1$. Equivalently, this is the space of functions on the differential envelope of ${\rm Lie}(x_1, \cdots, x_n)$ which is ${\rm Lie}(x_1, \dots, x_n, dx_1, \dots, dx_n)$ with obvious differential. The differential descends to $\Omega\langle x_1, \dots, x_n\rangle$ making it into a complex of vector spaces.

\begin{example}
Let $A$ be a generator of degree $1$. The corresponding space of differential forms $\Omega\langle A\rangle = \mathcal{F}\langle A, dA \rangle$ will be referred to as the universal Weil algebra. Sometimes it is more convenient to use the following generator of degree 2
$$
F := dA + \frac{1}{2} [A,A]
$$
instead of $dA$. Its differential is given by $dF=-[A, F]$. The low degree elements of $\Omega\langle A\rangle$ are as follows. In degree 2: $\langle A, A\rangle$, in degree 3: $\langle A, F\rangle$ and $\langle A, [A,A]\rangle$, and in degree 4: $\langle F, F \rangle$ and $\langle F, [A, A] \rangle$.
\end{example}

\begin{lem}\label{lem:d-acyclic}
The complex $(\Omega\langle x_1, \dots, x_k\rangle, d)$ is acyclic for $k\ge 1$.
\end{lem}

\begin{proof}
Consider the derivation $e$ of ${\rm Lie}(x_1, \dots, x_k, dx_1, \dots, dx_k)$ defined on generators as $e(dx_i)=x_i, e(x_i)=0$. A direct calculation shows that $de+ed =n$, where $n$ is the Euler derivation counting the number of generators in a Lie word. The derivations $e$ and $n$ descend to operators on $\Omega\langle x_1, \dots, x_k\rangle$. If $\alpha \in \Omega\langle x_1, \dots, x_k \rangle$ is a cocycle, we have $d(e(\alpha)) = n\alpha$. Since all elements of $\Omega\langle x_1, \dots, x_k \rangle$ contain at least two generators, $n\neq 0$ and we obtain an explicit homotopy which provides primitives of all cocycles.

\end{proof}

\subsection{Universal abelian gauge transformations}
In this section, we define a universal version of abelian connections and gauge transformations.
Let $A$ denote a generator of degree 1, and $x_1, \dots, x_n$ generators of degree 0. We define a co-simplicial complex of the following form:

\begin{equation} \label{eq:lie-cosim}
    \xymatrix{\Omega\langle A \rangle \,\, \ar@2[r]^(.4){\delta_0, \delta_1} & \,\, \Omega\langle A, x_1 \rangle \,\, 
      \ar@3[r]^(.4){\delta_0, \delta_1, \delta_2} &  \,\, 
      \Omega\langle A, x_1, x_2 \rangle  \dots},
\end{equation}
where the co-face maps are defined as follows:
\begin{equation} \label{eq:cofaces-lie-linear}
 \begin{array}{lllll}
\delta_0: &A \mapsto A+dx_1, &  x_1 \mapsto x_2 & \dots & x_n \mapsto x_{n+1},  \\
\delta_1: &A \mapsto A,  & x_1 \mapsto x_1 + x_2& \dots & x_n \mapsto x_{n+1}, 
\\
&\vdots & & & \\
\delta_n: & A\mapsto A, &  x_1 \mapsto x_1 & \dots & x_n \mapsto x_n + x_{n+1}, \\
\delta_{n+1}: & A\mapsto A, &  x_1 \mapsto x_1 & \dots & x_n \mapsto x_n. 
\end{array}
\end{equation}
The requirement that co-face maps commute with the differential determines them on  $dA, dx_1, \dots, dx_n$.
In particular, we have $\delta_i (dA)=dA $ for all $i$. The co-simplicial differential is defined by
$$
\delta = \sum_i (-1)^i \delta_i = \delta_0 - \delta_1 + \delta_2 - \dots 
$$
The differentials $d$ and $\delta$ commute hence we get a double complex. Let $D=d + \delta$ be the differential on its total complex:
\begin{equation}  \label{eq:total_simplicial}
\Omega^\bullet\langle A, x_\bullet \rangle =\oplus_{n=0}^\infty \, \Omega^\bullet\langle A, x_1, \dots, x_n\rangle.
\end{equation}
Note that in $D = d + \delta$ there is a sign suppressed in the second term that depends on the de Rham degree. It will be clear from the context which sign applies.
\begin{rmk} A more geometric construction of the complex \eqref{eq:lie-cosim} can be obtained as follows. Consider the opposite category of (differential) graded Lie algebras as a category of (dg) spaces. Let $G = \Lie(x)$ be the free Lie algebra in one generator considered as an object in this category. There are two non-isomorphic group structures on $G$: the one induced by linear function $x_1 + x_2$ and the one induced by the Baker-Campbell-Hausdorff Lie series $\log(e^{x_1} e^{x_2})$. The first one  turns $G$ into an abelian group. 

Let $T[1](-)$ denote the functor of taking universal differential envelopes. The space of left $G$-orbits in $T[1]G$ denoted by $V := G\backslash T[1]G$ inherits a right action of $T[1]G$, which we call the gauge action. One identifies $V = \Lie(A)$ for a generator $A$ of degree 1, and the action is given by
$$ 
A \mapsto A + dx
$$
for the first group structure on $G$, and
$$
A \mapsto e^{-x}A e^x + e^{-x}d(e^x)
$$
for the second one. This action lifts to $W=T[1]V$ as a right dg action. Functions on $W$ form the universal Weil algebra.

Note that in the non-abelain case $V$ carries a canonical Chevalley-Eilenberg differential  $d_\text{CE}: A \mapsto -\tfrac{1}{2}[A,A]$. Hence, $W=T[1]V$ carries the de Rham differential $d$ and the twisted differential  $d+ d_\text{CE}$ (which corresponds to the standard Weil differential). In this way, one gets the two geometrically equivalent coordinate systems on $T[1]V$: the first one is  $\{ A,dA\}$ and the second one is $\{ A, F = dA + \tfrac{1}{2}[A,A]\}$.

The space $* = \Lie(\varnothing) = 0$ carries a trivial left $T[1]G$ action. The complex \eqref{eq:total_simplicial} is then given by functions on the two-sided bar construction of $T[1]G$ acting on $T[1]V$ and on $*$:
$$
\mathcal{F}(B( T[1]V, T[1]G, *)).
$$
 The result is a simplicial dg-complex. Note that this description is completely categorical. In particular, if one replaces the category opposite to dg Lie algebras (which we were using above) by the category of dg manifolds, we recover the classical Bott-Shulman complex. 

In the abelian case, the construction can be simplified:
\begin{align*}
    B( T[1]W, T[1]G, *) &= B( \Lie(A, dA) , \Lie(x, dx), *) \\
    &= B( \Lie(A), \Lie(dx), *) \times \Lie(dA) \times B(*, \Lie(x), *),
\end{align*}
where the first factor is contractible and the last factor serves as a counterpart of $BG$. This analogy will be made more precise below.
\end{rmk}

In what follows we present several calculations for the complex $\Omega^\bullet\langle A, x_\bullet\rangle$ and study its properties.

\begin{lem}
The complex $(\Omega^\bullet\langle A, x_\bullet\rangle, D =d+ \delta)$ is acyclic.
\end{lem}

\begin{proof}
Recall that the complex $(\Omega^\bullet\langle A, x_1, \dots, x_n\rangle, d)$ is acyclic for all $n\ge 0$. We can compute the cohomology of the complex $(\Omega^\bullet\langle A, x_\bullet\rangle, D =d+ \delta)$ by using the spectral sequence with the first page $H^{\ge 0}(\Omega^\bullet\langle A, x_1, \dots, x_n\rangle, d) =0$. Since the first page vanishes, the cohomology of the total complex vanishes as well.

\end{proof}

The Lemma above makes use of the acyclicity of the universal de Rham complex (which we will consider as columns of the double complex). In what follows, we will also need information about the row direction given by the cosimplicial differential $\delta$. 

%
%
%

\begin{lem}\label{lem:retraction}
The injective chain map 
$
(\mathcal{F}\langle dA, x_\bullet\rangle, \delta)  \to (\Omega\langle A, x_\bullet\rangle, \delta)
$
induces an isomorphism in cohomology.
\end{lem}

\begin{proof}
Note that our complex is the diagonal part of the bi-cosimplicial complex
$$
C_{m,n}=\mathcal{F}\langle A, dx_1, \dots, dx_m, dA, x_1, \dots, x_n\rangle,
$$
where the coface maps of the first cosimplicial component act on generators $A, dx_1, \dots, dx_m$ and of the second cosimplicial component on generators $dA, x_1, \dots, x_n$. By the Eilenberg-Zilber Theorem, the cosimplicial cohomology of the diagonal $\oplus_n C_{n,n}$ is isomorphic to the bi-cosimplicial cohomology of the total complex $\oplus_{m,n} C_{m,n}$ with differential $\delta=\delta' + \delta''$. Here $\delta'$ acts on generators $A, dx_1, \dots, dx_m$ and $\delta''$ acts on generators $dA, x_1, \dots, x_n$. The following operator 
$$
(h \alpha)(A, dx_1, \dots, dx_{m-1}, dA, x_1, \dots, x_n) = \alpha(0, A, dx_1, \dots, dx_{m-1}, dA, x_1, \dots, x_n)
$$
provides a homotopy between the identity and the projection to constant functions of $A$ in $C_{0, \bullet}$ for the differential $ \delta'$.
Hence, it defines a deformation retraction to the subcomplex $\mathcal{F}\langle dA, x_\bullet\rangle$ and the injection of $\mathcal{F}\langle dA, x_\bullet\rangle$ in $\Omega\langle A, x_\bullet\rangle$ induces an isomorphism in cohomology, as required.
\end{proof}

 Let $\mathcal{H}\langle dA, x_1, \dots, x_n\rangle \subset \mathcal{F}\langle dA, x_1 \dots, x_n \rangle$ be the subspace spanned by the elements linear with respect to $x_1, \dots, x_n$ and completely skew-symmetric under the action of the permutation group $S_n$.

\begin{example}
Here are some examples of elements in $\mathcal{H}\langle dA, x_\bullet\rangle$: $\langle dA, [x_1, x_2] \rangle$ and $\varphi=\langle x_1, [x_2, x_3]\rangle$. The class $\varphi$ plays an important role the following section.
\end{example}

\begin{lem}
Elements $\alpha \in  \mathcal{H}\langle dA, x_\bullet\rangle$ are $\delta$-closed.
\end{lem}

\begin{proof}
We give an example of a calculation for $\mathcal{H}\langle dA, x_1 \rangle$:
$$
\begin{array}{lll}
(\delta \alpha)(dA, x_1, x_2) & = & \alpha(dA, x_1) - \alpha(dA, x_1 + x_2) + \alpha(dA, x_2) \\
& = & \alpha(dA, x_1) - \alpha(dA, x_1) - \alpha(dA, x_2) + \alpha(dA, x_2) \\
& = & 0.
\end{array}
$$
Here we have used the linearity of $\alpha$ with respect to the argument $x$. The calculation works in the same way in higher degrees.
\end{proof}

\begin{lem}  \label{lem:standard}
The injective chain map $(\mathcal{H}\langle dA, x_1, \dots, x_n\rangle, 0) \subset (\mathcal{F}\langle dA, x_1 \dots, x_n \rangle, \delta)$ induces an isomorphism in cohomology.
\end{lem}

\begin{proof}
Standard (see \cite{Severa-Willwacher(2011)}, \cite{barnatan}, \cite{vergne}, \cite{drinfeld}).
\end{proof}

The lemma above implies that $\delta$-closed elements in $\Omega\langle A, dA\rangle$ must be functions of $dA$. Such a function is unique up to a multiple, and it is given by the abelian second Chern class:
$$
p=\langle dA, dA\rangle.
$$
Since the total double complex  $(\Omega\langle dA, x_\bullet\rangle, D)$ is acyclic, one can ask for a primitive (the cochain of transgression) of the function $p$ which is given by the following formula
$$
\omega= \langle A, dA\rangle + \langle A, dx_1\rangle - \langle x_1, dx_2 \rangle.
$$
Since the differentials $d$ and $\delta$ both preserve the number of letters, the primitive can be chosen in the same graded component as the class $\langle dA, dA\rangle$. The form $\langle A, dA\rangle$ is the abelian Chern-Simons element, and $\langle A, dx_1\rangle$ is the abelian Wess-Zumino action, the expression $\langle x_1, dx_2 \rangle$ stands for the Kac-Peterson cocycle on the current algebra. The primitive $\omega$ is unique up to  exact terms
$$
\omega' = \omega +  D (a_1 \langle A, x_1 \rangle + a_2 \langle x_1, x_2 \rangle + a_3 \langle x_1, x_1\rangle + a_4 \langle x_2, x_2 \rangle),
$$
where $a_i$'s are arbitrary coefficients.

\subsection{Non-abelian descent equations}

In this section, we consider a more complicated cosimplicial structure on $\Omega\langle A, x_\bullet\rangle$ which captures the features of non-abelian gauge theory. In more detail, we define new coface maps:
\begin{equation} \label{eq:cofaces-lie-nonlinear}
 \begin{array}{lllll}
\Delta_0: & A \mapsto e^{-x_1} A e^{x_1} + e^{-x_1} d e^{x_1}, & x_1 \mapsto x_2 & \dots & x_n \mapsto x_{n+1},  \\
\Delta_1: & A \mapsto A,  & x_1 \mapsto \log(e^{x_1}e^{x_2}) & \dots & x_n \mapsto x_{n+1}  
\\
&\vdots & & & \\
\Delta_n: & A\mapsto A, &   x_1 \mapsto x_1 & \dots & x_n \mapsto \log(e^{x_n}e^{x_{n+1}}), \\
\Delta_{n+1}: & A\mapsto A, &   x_1 \mapsto x_1 & \dots & x_n \mapsto x_n. 
\end{array}
\end{equation}
Here the formulas
\begin{equation}\label{eq:adj-action}
\begin{split}
 e^{-x_1} A e^{x_1} &= {\rm exp}(-{\rm ad}_{x_1} ) A = A - [x_1, A]+ \frac{1}{2} [x_1, [x_1, A]] - \dots, \\
 e^{-x_1} d e^{x_1} &= f({\rm ad}_{x_1}) dx_1, \quad \text{for}\;f(z)= \frac{1-e^{-z}}{z}
\end{split}
\end{equation}
define the non-abelian gauge action and
$$
\log(e^{x_1}e^{x_2}) = x_1 + x_2 + \frac{1}{2} [x_1, x_2] + \dots 
$$
is the Baker-Campbell-Hausdorff series. The cosimplicial differential is defined as the alternated sum of coface maps,
$$
\Delta= \sum_i (-1)^i \Delta_i.
$$
As before, the de Rham and cosimplicial differentials commute and let again $D=d + \Delta$ denote the total differential (with suppressed de Rham sign).
The complex $(\Omega^\bullet\langle A, x_\bullet\rangle, D =d+ \Delta)$ is again acyclic since it is acyclic under the de Rham differential $d$. 

Recall that in the non-abelian framework it is convenient to use the generator
$$
F=dA + \frac{1}{2} [A,A]
$$
instead of $dA$ since it has a nice transformation law under gauge transformations: $F \mapsto e^{-x_1} F e^{x_1}$. In what follows we will be interested in cohomology of the cosimiplical differential $\Delta$. It is convenient to introduce a decreasing filtration of the complex $\Omega\langle A, x_\bullet \rangle$ by the number of generators in the given expression (the elements of filtration degree $k$ contain at least $k$ generators). It is clear that the associated graded complex coincides with the cosimplicial complex for abelian gauge transformations.

\begin{lem}
Let $\alpha \in \Omega\langle A \rangle $ be a $\Delta$-cocycle. Then,  $\alpha=\langle F\rangle = \lambda \langle F, F\rangle$ for $\lambda \in \Bbbk$.  
\end{lem}

\begin{proof}
Assume that $\Delta \alpha=0$. Then its principal part $\alpha_{\rm low}$ (containing the lowest number of generators) is a $\delta$-cocycle. Hence, $\alpha_{\rm low}=\lambda \langle dA, dA \rangle$ for some $\lambda \in k$. Note that $\alpha' = \lambda \langle F, F \rangle$ is a $\Delta$-cocycle. By the argument above, the lowest degree part of $\alpha - \alpha'$ vanishes. Hence, $\alpha=\alpha' = \lambda \langle F, F\rangle$, as required.
\end{proof}

\begin{lem}\label{lem:cohgen}
There is an element
$\phi=\langle x_1, [x_2, x_3]\rangle + \dots \in \mathcal{F}\langle x_1, x_2, x_3 \rangle$ such that 
$\Delta \phi=0$. Its cohomology class is the generator of the cohomology group $H(\mathcal{F}\langle x_\bullet \rangle, \Delta) \cong k[\phi]$.
\end{lem}

\begin{proof}
Consider the decreasing filtration on $\mathcal{F}\langle x_\bullet\rangle$ defined by the number of generators in the expression. Recall that the associated graded complex of $(\mathcal{F}\langle x_\bullet \rangle, \Delta)$ is $(\mathcal{F}\langle x_\bullet \rangle, \delta)$. By Lemma \ref{lem:standard}, the cohomology of the latter complex is spanned by the class of $\varphi=\langle x_1, [x_2, x_3]\rangle$. That is, $[\varphi]$ is the only class on the first page of the spectral sequence defined by the filtration. Hence, it cannot be killed at any later page and it lifts to a cohomology class  which spans $H(\mathcal{F}\langle x_\bullet \rangle, \Delta)$.
\end{proof}

Recall that the element $\langle F, F\rangle$ is also a cocycle under the de Rham differential $d$. Hence, $DF = dF + \Delta F =0$. Since the 
complex $(\Omega\langle A, x_\bullet \rangle, D)$ is acyclic, there is a primitive $\omega=\omega_3+\omega_2+\omega_1+\omega_0$ such that $D\omega = \langle F, F\rangle=0$, with $\omega_3\in \Omega^3\langle A \rangle$, $\omega_2 \in \Omega^2\langle A, x_1 \rangle$, $\omega_1 \in \Omega^1\langle A, x_1, x_2 \rangle$, $\omega_0 \in \Omega^0\langle A, x_1, x_2, x_3\rangle$. The following diagram visualizes the lower left coner of the double complex that we are using:

\begin{equation}\label{db-cx:uni-BG}
 \xymatrix{
 &&&&&&
\\
\Omega^3\langle A\rangle \ar[r]^{\Delta}\ar[u]^{d}
 &\Omega^3\langle A, x_1\rangle \ar[r]^{\Delta}\ar[u]^{d}
 &\dots
&
&
&
 \\
  \Omega^2\langle A\rangle \ar[r]^{\Delta}\ar[u]^{d}
  &\Omega^2\langle A, x_1\rangle \ar[r]^{\Delta}\ar[u]^{d}
 &\Omega^2\langle A, x_1, x_2\rangle \ar[r]^{\Delta}\ar[u]^{d}
&\dots 
& 
&
 \\
 \Omega^1\langle A\rangle \ar[r]^{\Delta}\ar[u]^{d}
 & \Omega^1\langle A, x_1\rangle  \ar[r]^{\Delta}\ar[u]^{d}
 &\Omega^1\langle A, x_1, x_2\rangle \ar[r]^{\Delta}\ar[u]^{d}
&\Omega^1\langle A, x_1, x_2, x_3\rangle \ar[r]^-{\Delta}\ar[u]^{d}
&\dots 
&
 \\
 \Omega^0\langle A\rangle \ar[r]^{\Delta}\ar[u]^{d}
 &\Omega^0\langle A, x_1\rangle  \ar[r]^{\Delta} \ar[u]^{d}
&\Omega^0\langle A, x_1, x_2\rangle \ar[r]^{\Delta}\ar[u]^{d}
&\Omega^0\langle A, x_1, x_2, x_3\rangle \ar[r]^-{\Delta} \ar[u]^{d}
&\dots 
&
}
 \end{equation}
which simplifies by degree reasons to
\begin{equation}\label{db-cx:uni-BG2}
 \xymatrix{
 &&&&&&
\\
\Omega^3\langle A\rangle \ar[r]^{\Delta}\ar[u]^{d}
 &\Omega^3\langle A, x_1\rangle \ar[r]^{\Delta}\ar[u]^{d}
 &\dots
&
&
&
 \\
  0 \ar[r]^{\Delta}\ar[u]^{d}
  &\Omega^2\langle A, x_1\rangle \ar[r]^{\Delta}\ar[u]^{d}
 &\Omega^2\langle A, x_1, x_2\rangle \ar[r]^{\Delta}\ar[u]^{d}
&\dots 
& 
&
 \\
 0 \ar[r]^-{\Delta}\ar[u]^{d}
 & \Bbbk \langle dx_1,x_1 \rangle \oplus \Bbbk \langle A, x_1 \rangle  \ar[r]^{\Delta}\ar[u]^{d}
 &\Omega^1\langle A, x_1, x_2\rangle \ar[r]^{\Delta}\ar[u]^{d}
&\Omega^1\langle A, x_1, x_2, x_3\rangle \ar[r]^-{\Delta}\ar[u]^{d}
&\dots 
&
 \\
  0\ar[r]^{\Delta}\ar[u]^{d}
 & \Bbbk \langle x_1,x_1 \rangle  \ar[r]^{\Delta} \ar[u]^{d}
&\mathcal{F}\langle x_1, x_2\rangle \ar[r]^{\Delta}\ar[u]^{d}
&\mathcal{F}\langle x_1, x_2, x_3\rangle \ar[r]^-{\Delta} \ar[u]^{d}
&\dots 
&
}
 \end{equation}
 
The main result of this section is the following theorem: 

\begin{thm}\label{thm:ascent}
Let $\omega_1 \in \Omega^1\langle A, x_1, x_2 \rangle$ such that $d\Delta \omega_1 =0$. Then, there exists a unique element $\omega=\omega_0+\omega_1+\omega_2+\omega_3$  such that $D\omega=\lambda \langle F, F \rangle$ for some  $\lambda \in {\Bbbk}$. Moreover, there exists an element $\omega_1$ which yields $\lambda=1$.  
\end{thm}

To prove this result, we need the following lemma:

\begin{lem}\label{lem:delta-exact}
The complex $(\Omega^k\langle A, x_\bullet\rangle, \Delta)$ is exact 
for $k$ odd and for $k=2$.
\end{lem}

\begin{proof}
Recall that the associated graded of the complex $(\Omega^k\langle A, x_\bullet \rangle, \Delta)$ with respect to filtration defined by the number of generators yields the complex $(\Omega^k\langle A, x_\bullet \rangle, \delta)$. By Lemma \ref{lem:retraction}, this latter complex is acyclic for $k$ odd. Hence, so is the complex $(\Omega^k\langle A, x_\bullet \rangle, \Delta)$. Among other things, this implies that the 3d row is exact in $\Omega^3\langle A, x_1\rangle$.


For $k=2$, the associated graded complex has non-trivial cohomology. The non-trivial cohomology classes are represented by $\langle dA, x_1 \rangle \in \Omega^2\langle A, x_1\rangle$ and $\langle dA, [x_1, x_2] \rangle \in \Omega^2\langle A, x_1, x_2\rangle$.  Choose the lift $\langle F, x_1 \rangle$ of the class of $\langle dA, x_1 \rangle$ (they only differ by higher degree terms) and compute:


\[
\begin{split}
 \Delta \langle F, x \rangle = &\langle e^{-x_1} F e^{x_1}, x_2 \rangle - \langle F, \log(e^{x_1}e^{x_2}) \rangle + \langle F, x_1 \rangle    \\
 =& \langle F, e^{x_1} x_2 e^{-x_1} \rangle - \langle F, \log(e^{x_1}e^{x_2}) \rangle + \langle F, x_1 \rangle    \\
 =& \langle F, x_2 + [x_1, x_2] \rangle - \langle F, x_1 + x_2 +\tfrac{1}{2} [x_1, x_2] \rangle + \langle F, x_1 \rangle, \quad \text{up to degree 3}\\
 =& \tfrac{1}{2} \langle F, [x_1, x_2] \rangle  \\
 = & \tfrac{1}{2} \langle dA, [x_1, x_2] \rangle, \quad \text{up to degree 3}
\end{split}
\]
Hence, the two cohomology classes kill each other in the $\Delta$-complex, and the $k=2$ row is exact
\end{proof}

Now we are ready to prove Theorem \ref{thm:ascent}.

\begin{proof}
Let $\omega_1 \in \Omega^1\langle A, x_1, x_2 \rangle$ such that $d\Delta \omega_1 =0$. We now perform a zig-zag process in order to find the remaining terms in $\omega$. 
First, since the columns of the double complex are exact with the respect to the de Rham differential $d$, there is a unique element $\omega_0 \in \Omega^0\langle A, x_1, x_2, x_3 \rangle$ such that $d \omega_0 = - \Delta \omega_1$. This implies $d \Delta \omega_0 = - \Delta d\omega_0 = \Delta^2 \omega_1 =0$, and by exactness of the columns $\Delta \omega_0 =0$.

Next, note that $\Delta d \omega_1 = - d \Delta \omega_1=0$. Hence, by Lemma \ref{lem:delta-exact} there is an element $\omega_2 \in \Omega^2\langle A, x_1 \rangle$ such that $\Delta \omega_2 = - d \omega_1$. The element $\omega_2$ is unique by exactness of the 2nd row in the term $\Omega^2\langle A, x_1\rangle$.
Finally, since $\Delta d \omega_2= - d \Delta \omega_2= d^2 \omega_1=0$, by Lemma \ref{lem:delta-exact} there is an element $\omega_3 \in \Omega^3 \langle A \rangle$ such that $ \Delta \omega_3 = - d\omega_2$. Since the only $\Delta$-closed element in $\Omega\langle A\rangle$ is $\langle F, F\rangle$ (which is in degree 4), the choice of $\omega_3$ is unique.


Observe that  $\Delta d \omega_3 = - d \Delta \omega_3=0$. Hence, $d \omega_3 \in \Span \{ \langle F, F \rangle \}$ and $d\omega_3=\lambda \langle F, F \rangle$ for some  $\lambda \in \Bbbk$. In summary, for  $\omega=\omega_3 + \omega_2+ \omega_1 +\omega_0$ we have $D\omega = \lambda \langle F, F \rangle$. 
On the other hand, since the double complex $\Omega\langle A, x_\bullet \rangle$ is exact under the total differential $D$, the equation $D\omega = \langle F, F\rangle$ admits solutions and there is an element $\omega_1$ which yields $\lambda=1$.

\end{proof}

\subsection{From universal calculus to finite dimensional Lie algebras}

Let $\g$ be a finite dimensional Lie algebra over the field $\Bbbk$. 
Recall that the Weil algebra of $\g$ is a differential graded commutative algebra (together with a $T[1]G$-action) defined on
$$
W\g = S\g^* \otimes \wedge \g^*.
$$
Note that the space $W\g \otimes \g$ is naturally a differential graded Lie algebra, being a product of a Lie algebra with a ring. Define an element $a \in W\g \otimes \g$ as the canonical element (with respect to the bilinear form) in $\wedge^1\g^* \otimes \g \subset W\g \otimes \g$.

\begin{lem}
Let $\g$ be a finite dimensional Lie algebra.
Then, the assignment $A \mapsto a$ defines a homomorphism of differential graded Lie algebras $\mathcal{P}_\g: {\rm Lie}\langle A, dA\rangle \to W\g \otimes \g$. 
If $\g$ carries an invariant symmetric bilinear form $\langle \cdot, \cdot \rangle$, this homomorphism induces a chain map $\Omega\langle A\rangle \to W\g$.
\end{lem}

\begin{proof}
The first statement follows from the fact that ${\rm Lie}\langle A, dA\rangle$ is a free dg Lie algebra in one generator of degree 1. The second statement follows directly from the definition of $\mathcal{F}$.
\end{proof}

In a similar fashion, consider the cosimplicial complex $W\g \otimes \Omega^\bullet(G_{\rm formal})$, where $G_{\rm formal}$ is the formal group integrating the Lie algebra $\g$. Recall that in the formal group the logarithm $\log: G_{\rm formal} \to \g$ is well-defined.
We have the following  simple result:

\begin{lem}
The assignment $A \mapsto a, x_i \mapsto \log(g_i)$ defines a morphism of co-simplicial complexes
$$
\mathcal{P}_\g: (\Omega\langle A, x_\bullet\rangle, D= d+\Delta) \to (W\g \otimes \Omega(G_{\rm formal}^\bullet), D=d + \Delta).
$$
\end{lem}

\begin{proof}
The definitions of $d$ and $\Delta$ on the two sides match.
\end{proof}

This observation implies that calculations in the universal complex $(\Omega\langle A, x_\bullet\rangle, D= d+\Delta)$ automatically carry over to all co-simplicial complexes $(W\g \otimes \Omega(G_{\rm formal}^\bullet), D=d + \Delta)$. In particular, solving descent equations in the universal framework gives rise to solutions for all finite dimensional Lie algebras with an invariant symmetric  bilinear form.

\section{The Kashiwara-Vergne theory}

\subsection{Derivations of free Lie algebras and the pentagon equation}

Let ${\rm der}_n$ be the Lie algebra of continuous derivations of $\Lie_n$ and ${\rm Aut}_n$ the group of continuous automorphisms of ${\rm Lie}_n$. Let $\tder_n$ be the vector space $\Lie_n^{\times n}$ equipped with the map
\begin{align*}
    \tder_n &\overset{\rho}{\to} \der_n \\
    (u_1, \cdots, u_n) &\mapsto u:(x_i \mapsto [u_i, x_i]).
\end{align*}
It is easy to see that the formula
\begin{equation} \label{eq:bracket_tder_n}
[(u_1, \cdots, u_n), (v_1, \cdots, v_n)]_i := \rho(u) v_i - \rho(v) u_i - [u_i, v_i].
\end{equation}
defines a Lie bracket on $\tder_n$ and makes $\rho$ into a Lie algebra homomorphism. This homomorphism defines an action of $\tder_n$ on ${\rm Lie}_n$, $\mathcal{F}\langle x_1, \dots, x_n\rangle$, $\Omega\langle x_1, \dots, x_n\rangle$ and other spaces where $\der_n$ acts. In particular, the action on $\Omega\langle x_1, \dots, x_n\rangle$ commutes with the de Rham differential:
$$
\rho(u) d\alpha = d(\rho(u) \alpha).
$$
We will often use a notation $u.\alpha = \rho(u)\alpha$ for actions of $\tder_n$ on various spaces.

Equipped with the  Lie bracket \eqref{eq:bracket_tder_n}, $\tder_n$ is a pro-nilpotent Lie algebra which readily integrates to a group denoted $\TAut_n$ together with the group homomorphism (again denoted by $\rho$):
\begin{align*}
    \rho: &\TAut_n \to \{g\in \Aut_n \ | \ g(x_i)=e^{\alpha_i}x_ie^{-\alpha_i}, \text{ for some } \ \alpha_i\in {\rm Lie}_n\}.
\end{align*}
In what follows we will need a Lie subalgebra of special derivations
$$
\sder_n=\{ u \in \tder_n \ | \ u.(x_1 + \dots + x_n)=0\} .
$$
This Lie algebra integrates to a group
$$
\SAut_n=\{ g \in \TAut_n \ | \ g.(x_1+ \dots x_n) = x_1 + \dots x_n\} .
$$

Define a vector space isomorphism $\gamma: \tder_n \to \Omega^1\left<x_1, \dots, x_n \right>$
$$
\gamma: u=(u_1, \dots, u_n) \mapsto \sum_i  \langle u_i, dx_i \rangle.
$$

The following Lemma is due to Drinfeld (\cite{drinfeld}):
\begin{lem}\label{lem:Drinfeld}
An element $u \in \tder_n$ is in $\sder_n$ if and only if $d \gamma(u)=0$.
\end{lem}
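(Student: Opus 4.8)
The plan is to reduce both conditions to statements in the reduced (cyclic) de Rham complex in which the forms $\gamma(u)$ live, where $\gamma$ identifies $\tder_n$ with the $\Lie_n$-valued $1$-forms $\sum_i \langle a_i, dx_i\rangle$. First I would unwind the left-hand side: since $\rho(u)(x_i)=[u_i,x_i]$, the element $u.(x_1+\dots+x_n)$ is exactly $\sum_i[u_i,x_i]$, so $u\in\sder_n$ if and only if $\sum_i[u_i,x_i]=0$ in $\Lie_n$. For the right-hand side, the graded Leibniz rule together with $d(dx_i)=0$ gives $d\gamma(u)=\sum_i\langle du_i, dx_i\rangle$, a reduced $2$-form. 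The Lemma is therefore equivalent to the assertion $\sum_i[u_i,x_i]=0 \Leftrightarrow \sum_i\langle du_i, dx_i\rangle=0$, and all the content lies in the compatibility between the de Rham differential and the map $(a_i)_i\mapsto\sum_i[a_i,x_i]$.

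I would prove the direction closed $\Rightarrow$ special using the Poincar\'e lemma for the reduced de Rham complex of the free algebra: in each positive weight the complex is acyclic, a contracting homotopy being the Euler contraction divided by the weight. Hence $d\gamma(u)=0$ forces $\gamma(u)=df$ for a reduced $0$-form (cyclic word) $f$. Since $df=\sum_i\langle\partial_i f, dx_i\rangle$, where $\partial_i$ denotes the cyclic derivative, and $\gamma$ is injective, this yields $u_i=\partial_i f$ for all $i$. It then remains to check the combinatorial identity $\sum_i[\partial_i f, x_i]=0$ valid for every cyclic word $f$; writing $f=\langle w\rangle$, both $\sum_i\sum_{w=a x_i b} (ba)x_i$ and $\sum_i\sum_{w=a x_i b} x_i(ba)$ range exactly once over all cyclic rotations of $w$, so their difference vanishes. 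This gives $\sum_i[u_i,x_i]=0$, i.e. $u\in\sder_n$.

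The reverse implication, special $\Rightarrow$ closed, is the main obstacle, because the Poincar\'e lemma only identifies closed forms with exact ones and does not by itself produce the primitive $f$. Concretely I must show that the condition $\sum_i[u_i,x_i]=0$ forces $\gamma(u)$ to be exact, i.e. that every solution $(a_i)_i$ of $\sum_i[a_i,x_i]=0$ is a cyclic gradient $(\partial_i f)_i$. I would finish by a weight-graded dimension count: the inclusion $\{(\partial_i f)_i\}\subseteq\{(a_i)_i:\sum_i[a_i,x_i]=0\}$ is the identity just proved, and in each weight $w$ both spaces have dimension equal to the number of cyclic words of weight $w$ --- the former because $d$ is injective on reduced $0$-forms of positive weight (again by the Euler homotopy), the latter by computing the rank of $(a_i)_i\mapsto\sum_i[a_i,x_i]$. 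Matching dimensions upgrades the inclusion to an equality, so $\gamma(u)$ is exact and hence closed. I expect the rank computation (equivalently, an explicit reconstruction of $f$ from the $u_i$) to be the delicate step; the rest is the formal structure of the reduced de Rham complex together with the rotation identity above.
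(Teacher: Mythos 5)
The paper itself does not prove this lemma---it quotes it from \cite{drinfeld}---so your proposal has to stand on its own. Its skeleton is correct: $u\in\sder_n$ iff $\sum_i[u_i,x_i]=0$; the Euler homotopy ($de+ed=$ weight) gives the Poincar\'e lemma in positive weight, so closed $1$-forms are exact; $df=\sum_i\langle \partial_i f,dx_i\rangle$ with $\partial_i$ the cyclic derivative, and the coefficients of a $1$-form are unique, so $d\gamma(u)=0$ forces $u_i=\partial_i f$; and your rotation argument for $\sum_i[\partial_i f,x_i]=0$ is exactly right. Thus the implication $d\gamma(u)=0\Rightarrow u\in\sder_n$ is complete as written.

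The genuine gap is the step you yourself flag: the rank computation for $\mu:(a_i)_i\mapsto\sum_i[a_i,x_i]$ is not a ``delicate step'' to be checked later; it is the entire content of the other implication $u\in\sder_n\Rightarrow d\gamma(u)=0$, and without it that direction is unproved. Two remarks toward closing it. First, the dimension count must be made over \emph{associative} tuples, $a_i\in\mathcal{F}:=\mathcal{F}\langle x_1,\dots,x_n\rangle$ (your inclusion already presupposes this, since cyclic gradients are not Lie elements in general); over Lie tuples the claimed equality of dimensions is false---for $n=2$ the Lie solutions $(a_1,a_2)$ with entries homogeneous of weight $2$ form the zero space, whereas there are four cyclic words of weight $3$---and the argument survives only because a Lie solution is in particular an associative one. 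Second, here is the missing computation: the image of $\mu$ is exactly the commutator space $[\mathcal{F},\mathcal{F}]$, because the identity $[a,wx_i]=[aw,x_i]+[x_ia,w]$ permits an induction on the length of $w$ writing any $[a,w]$ as $\sum_i[b_i,x_i]$. Writing $c_m$ for the dimension of the weight-$m$ part of $\mathcal{F}/[\mathcal{F},\mathcal{F}]$ (the cyclic words), the weight-$m$ part of $[\mathcal{F},\mathcal{F}]$ has dimension $n^m-c_m$, whence
$$
\dim\ker\mu \;=\; n\cdot n^{m-1}-\left(n^m-c_m\right) \;=\; c_m
$$
in weight $m$. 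Since $f\mapsto df$ is injective in positive weight, the cyclic gradients form a subspace of $\ker\mu$ of the same dimension $c_m$, so your inclusion is an equality and the proof closes. (All spaces involved are weight-graded with finite-dimensional pieces, so the argument passes verbatim to the completed setting used in the paper.)
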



The construction described above has the following naturality property. Every partially defined map $f: \{1,\dots,n\} \to \{1,\dots, m\}$ induces a homomorphism of free Lie algebras ${\rm Lie}_m \mapsto {\rm Lie}_n$ defined by
$$
f^*: x_i \mapsto \sum_{f(l)=i} x_l.
$$
Furthermore, this map induces a map on differential forms
$$
f^*: \Omega\langle x_1, \dots, x_m\rangle \to \Omega\langle x_1, \dots, x_n \rangle.
$$
Composed with the inverse of $\gamma$, this map defines a Lie homomorphism  $f^*: \tder_m \to \tder_n$;
$$
    \left(u_i\right)_{i \in \{1, \dots, m\}} \longmapsto \left(u_{f(k)}(x_i = \sum_{f(l) = i} x_l) \right)_{k \in \{1,\dots,n\}}.
$$
The standard notation $f^*(u) = u^{f^{-1}(1), f^{-1}(2), \dots, f^{-1}(n)}$ can be used. For instance, 
$$
u^{12,3} = (u_1(x_1+x_2,x_3),u_1(x_1+x_2,x_3), u_2(x_1+x_2,x_3)).
$$

These maps easily integrate to maps between $\TAut_n$. For example, for  $g \in \TAut_2$ one can define an element $\Phi_g \in \TAut_3$ 
\begin{align}\label{eq:associator}
    \Phi_g &= (g^{12,3})^{-1} (g^{1,2})^{-1} g^{2,3} g^{1,23}.
\end{align}
Then part of Proposition 7.1 in \cite{AT} reads:
\begin{lem}
Let $g \in TAut_2$ and assume that
$$
g(x_1+x_2) = \log(e^{x_1}e^{x_2}).
$$
Then, $\Phi_g \in \SAut_3$ and it satisfies the pentagon equation
\begin{align}
    \Phi_g^{12,3,4} \Phi_g^{1,2,34} = \Phi_g^{1,2,3} \Phi_g^{1,23,4} \Phi_g^{2,3,4} \tag{\pentagon}.
\end{align}
\end{lem}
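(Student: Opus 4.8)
The plan is to prove the two assertions separately: that $\Phi_g$ fixes $x_1+x_2+x_3$, and that it satisfies the pentagon relation. For the first, I would compute the action of $\Phi_g$ on $X:=x_1+x_2+x_3$ directly from the naturality of the construction, which yields $g^{1,23}\circ f^{*}=f^{*}\circ g$ for the block map $f$ realizing the superscript $1,23$. Combined with the hypothesis $g(x_1+x_2)=\log(e^{x_1}e^{x_2})$, each elementary factor sends $X$ to a Baker--Campbell--Hausdorff word: $g^{1,23}(X)=\log(e^{x_1}e^{x_2+x_3})$, and since $g^{2,3}$ is an automorphism commuting with all Lie operations, $g^{2,3}g^{1,23}(X)=\log(e^{x_1}e^{x_2}e^{x_3})$. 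The analogous computation on the other pair gives $g^{1,2}g^{12,3}(X)=\log(e^{x_1}e^{x_2}e^{x_3})$, the two results coinciding precisely because the BCH product is associative. Hence $g^{2,3}g^{1,23}(X)=g^{1,2}g^{12,3}(X)$, which rearranges to $\Phi_g(X)=X$; as $\Phi_g$ is manifestly a product of elements of $\TAut_3$, this is exactly the assertion $\Phi_g\in\SAut_3$.

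For the pentagon I would first record that $\Phi_g=(g^{12,3})^{-1}(g^{1,2})^{-1}g^{2,3}g^{1,23}$ is the nonabelian coboundary of $g$: writing the four cofaces $\TAut_2\to\TAut_3$ as $d_0 g=g^{2,3}$, $d_1 g=g^{12,3}$, $d_2 g=g^{1,23}$, $d_3 g=g^{1,2}$, one has $\Phi_g=(d_1 g)^{-1}(d_3 g)^{-1}(d_0 g)(d_2 g)$, so that the pentagon is the statement that this coboundary is a cocycle. To verify it I would apply the five cofaces $\TAut_3\to\TAut_4$ to $\Phi_g$ and expand each $\Phi_g^{(\cdot)}$ into a product of $g$'s with composite superscripts, using the functoriality of $f\mapsto f^{*}$ that dictates how iterated superscripts compose. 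For example one finds $\Phi_g^{12,3,4}=(g^{123,4})^{-1}(g^{12,3})^{-1}g^{3,4}g^{12,34}$ and $\Phi_g^{1,2,34}=(g^{12,34})^{-1}(g^{1,2})^{-1}g^{2,34}g^{1,234}$, and similarly for $\Phi_g^{1,2,3}$, $\Phi_g^{1,23,4}$, $\Phi_g^{2,3,4}$.

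The left-hand side then telescopes at once, the adjacent factors $g^{12,34}$ and $(g^{12,34})^{-1}$ cancelling, and the substance of the proof is to show that the twelve-factor right-hand side $\Phi_g^{1,2,3}\Phi_g^{1,23,4}\Phi_g^{2,3,4}$ collapses to the same short product. I expect the main obstacle to be precisely this collapse: at first order in $g$ the identity reduces to $\partial^{2}=0$ for the simplicial coboundary and is automatic, but at the group level one must repeatedly cancel matching factors and commute factors with disjoint variable support past one another in the correct order. This is exactly Drinfeld's observation that twisting the trivial associator by an arbitrary element $g\in\TAut_2$ preserves the pentagon; in particular I expect it to use only the functoriality of the construction and not the hypothesis on $g$, which enters solely through the membership $\Phi_g\in\SAut_3$ established above.
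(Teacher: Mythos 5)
Your first assertion and its proof are fine: writing $G(x,y)=g(x+y)$, the hypothesis plus the fact that automorphisms commute with Lie series gives $g^{2,3}g^{1,23}(x_1+x_2+x_3)=G(x_1,G(x_2,x_3))$ and $g^{1,2}g^{12,3}(x_1+x_2+x_3)=G(G(x_1,x_2),x_3)$, and associativity of BCH makes these equal, whence $\Phi_g(x_1+x_2+x_3)=x_1+x_2+x_3$. (For calibration: the paper does not prove this lemma at all --- it quotes Proposition 7.1 of \cite{AT} --- so the relevant comparison is with that proof, whose first half is exactly your computation.)

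The pentagon half contains a genuine gap, located exactly where you placed your confidence. You claim the right-hand side collapses using only functoriality, free cancellation, and commutation of factors with \emph{disjoint} variable support, and you conclude that the hypothesis on $g$ is not needed. Both claims are false. Expanding,
\begin{equation*}
\Phi_g^{1,2,3}\Phi_g^{1,23,4}\Phi_g^{2,3,4}
=\bigl[(g^{12,3})^{-1}(g^{1,2})^{-1}g^{2,3}g^{1,23}\bigr]
\bigl[(g^{123,4})^{-1}(g^{1,23})^{-1}g^{23,4}g^{1,234}\bigr]
\bigl[(g^{23,4})^{-1}(g^{2,3})^{-1}g^{3,4}g^{2,34}\bigr],
\end{equation*}
the obstructions are the adjacencies $g^{1,23}\cdot(g^{123,4})^{-1}$ and $g^{1,234}\cdot(g^{23,4})^{-1}$, whose supports are \emph{nested}, not disjoint, and such pairs do not commute for general $g$: for instance, with $u=(x_2,0)$ and $v=(0,x_1)$ in $\tder_2$ one has $u^{1,23}=(x_2+x_3,0,0,0)$, $v^{123,4}=(0,0,0,x_1+x_2+x_3)$, and $[u^{1,23},v^{123,4}]_4=\rho(u^{1,23})(x_1+x_2+x_3)=[x_2+x_3,x_1]\neq 0$. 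Moreover, the only disjoint-support pair among the twelve factors above is $g^{1,2}\leftrightarrow g^{3,4}$, so no sequence of free cancellations and disjoint-support swaps can reduce this word to the six-letter word $(g^{1,2}g^{12,3}g^{123,4})^{-1}g^{3,4}g^{2,34}g^{1,234}$ which you (correctly) obtained for the left-hand side. What actually makes the right-hand side collapse is a commutation phenomenon special to $\sder$: if $w\in\sder_3$, then $[w^{1,2,3},v^{123,4}]=0$ for every $v\in\tder_2$, because $\rho(w^{1,2,3})$ kills both $x_1+x_2+x_3$ and $x_4$ (hence kills the components of $v^{123,4}$), while $\rho(v^{123,4})$ acts on $\Lie\langle x_1,x_2,x_3\rangle$ as ${\rm ad}_{v_1}$; the analogous statement holds for $w^{2,3,4}$ against $v^{1,234}$. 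Consequently $\Phi_g^{1,2,3}$ commutes with $(g^{123,4})^{-1}$ and $\Phi_g^{2,3,4}$ commutes with $g^{1,234}$ \emph{precisely because} $\Phi_g\in\SAut_3$; with these two commutations one gets $\Phi_g^{1,2,3}\Phi_g^{1,23,4}=(g^{123,4})^{-1}(g^{12,3})^{-1}(g^{1,2})^{-1}g^{2,3}g^{23,4}g^{1,234}$ and then, moving $g^{1,234}$ past $\Phi_g^{2,3,4}$ and cancelling, the left-hand side. So the first half of the lemma, and with it the hypothesis on $g$, is an essential input to the pentagon, not a separate add-on. Your appeal to ``Drinfeld's observation'' is a misremembering that points the same way: twisting by an arbitrary element preserves the pentagon only if one simultaneously twists the coproduct, whereas here the coface maps ${}^{1,23}$, ${}^{12,3}$, etc.\ are fixed; with a fixed cosimplicial structure, the coboundary of a general $g\in\TAut_2$ has no reason to satisfy, and in general does not satisfy, the pentagon.
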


\subsection{The 1-cocycle  $c: \tder_n \to \Omega^1_n$}


In this section we define and study a 1-cocycle $c: \tder_n \to \Omega\langle x_1, \dots, x_n\rangle$ which plays a key role in constructing solutions of descent equations.

\begin{thm}
The map $c: \tder_n \to  \Omega^1\langle x_1, \dots, x_n\rangle$ defined by formula
\begin{equation} \label{eq:cocycle_c}
c: u=(u_1, \dots, u_n) \mapsto \sum_{i=1}^n \langle x_i,du_i\rangle
\end{equation}
is a 1-cocycle. That is, 
\begin{align*}
    c([u,v]) = u.c(v) - v.c(u) \quad \forall u,v \in \tder_n.
\end{align*}
Moreover, $c$ is natural in the sense that $c(f^*(u)) = f^*(c(u))$ for any partially defined map $f: \{1, \dots, m \} \to \{1, \dots, n\}$ of finite sets.
\end{thm}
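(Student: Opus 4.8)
The plan is to establish the cocycle identity by a direct computation inside $\Omega^1\langle x_1,\dots,x_n\rangle$, using that each $\rho(u)$ acts as a derivation of the de Rham algebra which commutes with $d$, and then to deduce naturality from the functoriality of $f^*$ on forms together with a reindexing of the defining sum. The whole argument is formal once the correct structural inputs are isolated.

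For the cocycle property I would first expand $u.c(v)=\rho(u)\sum_i\langle x_i,dv_i\rangle$. Since $\rho(u)$ is a derivation with $\rho(u)x_i=[u_i,x_i]$ and $\rho(u)\,d\alpha=d(\rho(u)\alpha)$, the Leibniz rule gives $\rho(u)\langle x_i,dv_i\rangle=\langle[u_i,x_i],dv_i\rangle+\langle x_i,d(\rho(u)v_i)\rangle$, and symmetrically for $v.c(u)$. Subtracting, the terms $\langle x_i,d(\rho(u)v_i)\rangle-\langle x_i,d(\rho(v)u_i)\rangle$ match exactly the $\rho(u)v_i-\rho(v)u_i$ part of $c([u,v])$ coming from the bracket \eqref{eq:bracket_tder_n}, so the entire identity collapses to the termwise claim
$$\langle[u_i,x_i],dv_i\rangle-\langle[v_i,x_i],du_i\rangle+\langle x_i,d[u_i,v_i]\rangle=0\qquad(1\le i\le n).$$
Expanding $d[u_i,v_i]$ by Leibniz and writing the commutators out, the summands involving $x_iu_i\,dv_i$ and $x_iv_i\,du_i$ cancel, leaving only the four sandwiched monomials $u_ix_i\,dv_i$, $-v_ix_i\,du_i$, $x_i\,du_i\,v_i$ and $-x_i\,dv_i\,u_i$.

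The step I expect to be the main obstacle is precisely this last cancellation. In the free bimodule of noncommutative $1$-forms these four terms do \emph{not} vanish, so one must invoke the defining cyclic symmetry of $\Omega^1\langle x_1,\dots,x_n\rangle$, namely that a degree-$0$ factor may be transported cyclically past a $1$-form, $a\,\omega\equiv\omega\,a$. Applying this to rewrite $\langle x_i,dv_i\,u_i\rangle$ as $\langle u_ix_i,dv_i\rangle$ and $\langle x_i,du_i\,v_i\rangle$ as $\langle v_ix_i,du_i\rangle$ makes the four terms cancel in pairs, proving the termwise claim. I would be careful to verify that this cyclic relation is the only nontrivial input, since it is exactly what forces the target to be the cyclic $1$-forms rather than the full bimodule, and getting this identification right is the crux of the whole statement.

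Finally, for naturality I would use that, as recalled above, $f^*$ extends to a morphism of de Rham algebras, so it commutes with $d$ and satisfies $f^*\langle a,\omega\rangle=\langle f^*a,f^*\omega\rangle$. Writing $f^*(u)=(w_k)_k$ with $w_k=f^*\big(u_{f(k)}\big)$ (and $w_k=0$ when $f(k)$ is undefined), one gets $c(f^*(u))=\sum_{k:\,f(k)\text{ defined}}\langle x_k,f^*(du_{f(k)})\rangle$ after commuting $f^*$ with $d$. On the other side $f^*(c(u))=\sum_i\langle f^*(x_i),f^*(du_i)\rangle=\sum_i\sum_{l:\,f(l)=i}\langle x_l,f^*(du_i)\rangle$, using $f^*(x_i)=\sum_{f(l)=i}x_l$ and linearity of the pairing in its first slot. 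Re-indexing the double sum by $k=l$, $i=f(k)$ identifies the two expressions. Thus the only genuine content of naturality is the bookkeeping over the fibers of $f$, the analytic inputs (functoriality of $f^*$ on forms and its commutation with $d$) being already available.
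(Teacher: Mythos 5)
Your proof is correct and takes essentially the same route as the paper: the same Leibniz-rule expansion (using that $\rho(u)$ is a derivation commuting with $d$) reduces the cocycle identity to the termwise cancellation $\langle[u_i,x_i],dv_i\rangle-\langle[v_i,x_i],du_i\rangle+\langle x_i,d[u_i,v_i]\rangle=0$, which both you and the paper settle by the cyclic property of the pairing, and the naturality argument is the same fiberwise reindexing. The only difference is presentational: the paper packages the cyclicity step as the identity $\langle[u_i,x_i],dv_i\rangle=\langle x_i,[dv_i,u_i]\rangle$ instead of expanding into monomials as you do.
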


\begin{proof}
First, we have $u.c(v)=\sum_{i=1}^n u.\langle x_i,dv_i\rangle=\sum (\langle [u_i,x_i],dv_i\rangle + \langle x_i,d(u.v_i)\rangle)$. Similarly, $v.c(u)=\sum v.\langle x_i,du_i\rangle=\sum_{i=1}^n (\langle [v_i,x_i],du_i\rangle + \langle x_i,d(v.u_i)\rangle)$.
Therefore, 
\begin{eqnarray*}u.c(v)-v.c(u)&=&\sum_{i=1}^n(\langle x_i,d(u.v_i)-d(v.u_i)\rangle+\langle [u_i,x_i],dv_i\rangle-\langle [v_i,x_i],du_i\rangle)\\&=&\sum_{i=1}^n(\langle x_i,d(u.v_i)-d(v.u_i)+d([v_i,u_i])\rangle \\
&=& c([u,v]),
\end{eqnarray*}
here we use the fact $\langle [u_i,x_i],dv_i\rangle-\langle [v_i,x_i],du_i\rangle=\langle x_i,[dv_i,u_i]\rangle-\langle x_i,[du_i,v_i]\rangle =\langle x_i, d[v_i, u_i]\rangle$.

For the naturality statement, let $f: \{1, \dots, m\} \to \{1, \dots, n\}$ be a partially defined map. On the one hand, we have
$$f^*(c(u))=\sum_{i=1}^nf^*\langle x_i,du_i\rangle=\sum_{i=1}^n\left(\sum_{k\in f^{-1}(i)}\langle x_k,du_i(x_j= \sum_{f(l) = j} x_l)\rangle\right).$$
Note that the summation is actually over $k$, and $i=f(k)$.

On the other hand, by the definition of $f^*:\tder_m{\longrightarrow} \tder_n$, we have
$$c(f^*(u))=\sum_{k=1}^n\langle x_k,du_{f(k)}(x_j= \sum_{f(l) = j} x_l)\rangle,$$
which implies the identity $c(f^*(u)) = f^*(c(u))$.
\end{proof}

\begin{rmk}
The map $c$ defines a nontrivial class in the Lie algebra cohomology 
$$
[c] \in H^1(\tder_n, \Omega^1\langle x_1, \dots, x_n\rangle).
$$
Indeed, 
the degree of $c$ is equal to $+1$. If $c$ were a trivial 1-cocycle, it would have been of the form $c(u) = u.\alpha$ for some $\alpha \in \Omega^1\langle x_1, \dots, x_n\rangle$ of degree $+1$. However, such elements do not exist.
\end{rmk}


\begin{lem}
The cocycle $c$ is a bijection. Furthermore, it restricts to a bijection 
$$c : \sder_n \to \Omega^{1,\text{closed}}\langle x_1, \dots, x_n\rangle \cong \Omega^0\langle x_1, \dots, x_n\rangle.$$

\end{lem}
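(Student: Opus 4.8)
The plan is to reduce the entire statement to the single observation that, after applying a contracting homotopy, $c$ and $\gamma$ become \emph{proportional} on $\sder_n$, and then to bootstrap the global bijectivity from the behaviour on $\sder_n$ together with Drinfeld's Lemma. First I would record the integration-by-parts identity
\[
c(u) + \gamma(u) = d\,t(u), \qquad t(u) := \sum_{i} \langle x_i, u_i\rangle \in \Omega^0\langle x_1,\dots,x_n\rangle ,
\]
which follows from the Leibniz rule $d\langle x_i,u_i\rangle = \langle x_i, du_i\rangle + \langle u_i, dx_i\rangle$ and the cyclic invariance of $\langle\,\cdot\,,\,\cdot\,\rangle$ already exploited in the proof of the Theorem above. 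Applying $d$ and using $d^2=0$ gives $d\,c(u) = -\,d\,\gamma(u)$, so by Drinfeld's Lemma $c(u)$ is closed \emph{iff} $u\in\sder_n$. In particular $c$ maps $\sder_n$ into $\Omega^{1,\text{closed}}$, and any global kernel element of $c$ automatically lies in $\sder_n$.

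Next I would invoke the (noncommutative) Poincaré Lemma for the complex $\Omega^0\xrightarrow{d}\Omega^1\xrightarrow{d}\Omega^2$: the Euler homotopy $h=\tfrac1{\deg}\iota_E$, where $\iota_E$ contracts $dx_j\mapsto x_j$, i.e.\ $\iota_E\langle a,dx_j\rangle = \langle a,x_j\rangle$, satisfies $dh+hd=\mathrm{id}$ in positive degree. This produces the isomorphism $\Omega^{1,\text{closed}}\cong\Omega^0$ of the statement, with $h$ inverting $d$ on closed forms; combined with Drinfeld's Lemma it makes $h\gamma\colon\sder_n\to\Omega^0$ a bijection. The crucial computation is that on a homogeneous $u$ of degree $k$ one has $h\gamma(u)=\tfrac1{k+1}\,t(u)$, straight from $\iota_E\gamma(u)=\sum_i\langle u_i,x_i\rangle = t(u)$ (using $\langle u_i,x_i\rangle=\langle x_i,u_i\rangle$ cyclically). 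Since $h\,d\,t(u)=t(u)$ in positive degree, the identity of the first paragraph gives $h\,c(u) = t(u)-h\gamma(u) = k\cdot h\gamma(u)$. As $k\ge 1$ on $\sder_n$ and $h\gamma|_{\sder_n}$ is a bijection onto $\Omega^0$, I conclude that $c\colon\sder_n\to\Omega^{1,\text{closed}}$ is a bijection, which is the "furthermore" assertion.

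Finally I would deduce the global bijectivity. Injectivity is immediate: if $c(u)=0$ then $u\in\sder_n$ by the first paragraph, whence $u=0$ by the previous step. Since both $\gamma$ and $c$ send $(\tder_n)_k$ into $(\Omega^1)_{k+1}$ and $\gamma$ is an isomorphism, the graded pieces satisfy $\dim(\tder_n)_k=\dim(\Omega^1)_{k+1}<\infty$; an injective graded map between equal finite-dimensional pieces is bijective, so $c$ is a bijection. I expect the genuine obstacle to be the second paragraph, namely installing the noncommutative Poincaré Lemma with the explicit contracting homotopy on this cyclic de Rham complex and verifying the contraction identity $\iota_E\gamma(u)=t(u)$ carefully (tracking the cyclic-word identifications); everything else then collapses formally onto Drinfeld's Lemma and the integration-by-parts identity once this homotopy is in hand.
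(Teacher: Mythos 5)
Your proof is correct, but it runs in the opposite direction from the paper's and swaps out the mechanism for global bijectivity. The paper first proves that $c$ is bijective on all of $\tder_n$ by a pure operator computation: $c\circ\gamma^{-1} = d\circ e - \mathrm{Id}$, where $e$ is exactly your contraction $\iota_E$ (the derivation with $e(dx_i)=x_i$, $e(x_i)=0$), and this operator is inverted explicitly using $(d\circ e - \mathrm{Id})(\mathrm{Id} - e\circ d) = (de+ed) - \mathrm{Id}$, the right-hand side being invertible because the Euler derivation acts on each homogeneous piece by an integer $\geq 2$; the restriction statement is then a corollary of the identity $\gamma(u)+c(u)=d\sum_i\langle x_i,u_i\rangle$ together with Drinfeld's Lemma --- which is precisely your first paragraph. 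You instead establish the restricted bijection first, via the normalized homotopy $h=(de+ed)^{-1}e$; note that the ``noncommutative Poincar\'e Lemma'' you flag as the genuine obstacle is nothing more than the paper's own Euler identity $de+ed = $ (number of generators), so it carries no extra burden beyond what the paper already assumes. Along the way you obtain the pleasant proportionality $hc = k\cdot h\gamma$ in degree $k$, hence $c=k\,\gamma$ on homogeneous degree-$k$ elements of $\sder_n$, which the paper's argument does not make visible. You then recover global bijectivity softly: injectivity from the closedness characterization combined with the restricted result, and surjectivity by comparing finite-dimensional graded pieces against the isomorphism $\gamma$. Both arguments are sound; the paper's buys an explicit two-sided inverse of $c$ with no dimension count, while yours buys a transparent, degree-wise explicit description of the restricted bijection at the cost of a finite-dimensionality argument (legitimate here, since the graded pieces of $\tder_n$ and $\Omega^1\langle x_1,\dots,x_n\rangle$ are finite-dimensional and $c$ is degree-homogeneous) that the paper avoids.
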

\begin{proof}
Consider the map $c \circ \gamma^{-1}: \Omega^1_n \to \Omega^1_n$. It is equal to $c\circ \gamma = d \circ e - {\rm Id}$ (where $e$ is the derivation of $\Omega_n$ defined by $e(dx_i)=x_i, e(x_i)=0$). Indeed,
$$
(d\circ e -{\rm Id}) \sum_{i=1}^n \langle u_i, dx_i \rangle = d \sum_{i=1}^n \langle u_i, x_i \rangle - \sum_{i=1}^n \langle u_i, dx_i \rangle = \sum_{i=1}^n
\langle x_i, du_i\rangle .
$$

Note that
$$
(d \circ e - {\rm Id})({\rm Id} - e \circ d) = n -{\rm Id},
$$
where $n = de+ed$ is the Euler derivation counting the number of generators. Since, the number of generators is always greater or equal to two, $n -{\rm Id}$ is an invertible operator. Hence, $c$ is a bijection.

Note that 
$$
\gamma(u) + c(u) = d\sum_{i=1}^n \langle x_i, u_i\rangle.
$$
Hence, $c(u)$ is closed if and only if $\gamma(u)$ is closed, and $\gamma(u)$ is closed if and only if $u \in \sder_n$ as follows from Drinfeld's Lemma \ref{lem:Drinfeld}.
\end{proof}

The Lie algebra $1$-cocycle $c$ integrates to a Lie group cocycle, which is the map $C:\TAut_n \to \Omega^1_n$ uniquely determined by the following properties:
\begin{align*}
    C(g \circ f) &= C(g) + g. C(f), \\
    \left. \frac{d}{dt}\right\vert_{t=0} C(e^{t u}) &= c(u).
\end{align*}
The map $C$ is given by the explicit formula,
\begin{align*}
C(e^u)=\left(\frac{e^u-1}{u}\right).c(u), \ \forall u\in \tder_n.
\end{align*}


\begin{pro}
The Lie group cocycle $C : \TAut_n \to \Omega^1$ is a bijection. It restricts to a bijection $C : \SAut_n \to \Omega^{1,\text{closed}}_n \cong \Omega^0_n$.
\end{pro}
\begin{proof}
This follows directly from the fact that the action of $\tder_n$ on $\Omega^1_n$ is of positive degree and from the bijectivity of $c$.\end{proof}

\subsection{Solving descent equations with Kashiwara-Vergne theory}

Let us define $S=\{g\in \TAut_2~|~g(x_1+x_2)=\log (e^{x_1}e^{x_2})\}$, the set of solutions to the first equation in the Kashiwara-Vergne problem. 

\begin{lem}    \label{lem:g_in_S}
Any $g\in S$ satisfies $d\Delta C(g) = 0$.
\end{lem}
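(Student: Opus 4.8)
The plan is to recognize the $1$-form $\Delta C(g)$ as the value of the group cocycle $C$ on the associator $\Phi_g$, and then to read off its closedness from the two structural results already in place: that $\Phi_g$ is \emph{special} and that $C$ sends special automorphisms to closed forms. As a preliminary step I would promote the naturality of $c$ to the group level, establishing $C(f^{*}g)=f^{*}C(g)$ for every partially defined $f$. This integrates the naturality statement of the Theorem: starting from the explicit formula $C(e^{u})=\left(\frac{e^{u}-1}{u}\right).c(u)$, using that each $f^{*}$ is a Lie homomorphism intertwining the $\tder$-actions on $\Omega^{1}$, and invoking $c(f^{*}u)=f^{*}c(u)$, one applies $f^{*}$ term by term to the series and obtains the claim.

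Next I would expand $C(\Phi_g)$ explicitly. Writing $\Phi_g=(g^{12,3})^{-1}(g^{1,2})^{-1}g^{2,3}g^{1,23}$ and applying the group cocycle identity $C(ab)=C(a)+a.C(b)$ repeatedly, together with $C(a^{-1})=-a^{-1}.C(a)$ (which follows from $C(\mathrm{id})=0$), every factor collapses to one of $C(g^{12,3}),C(g^{1,2}),C(g^{2,3}),C(g^{1,23})$. By the naturality established in the preliminary step each of these equals a pullback $f^{*}C(g)$, so $C(\Phi_g)$ is assembled from the four faces of $g$ with exactly the same inverses and ordering that define the associator; this is precisely the combination denoted $\Delta C(g)$, giving $\Delta C(g)=C(\Phi_g)$.

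It then remains to invoke the two cited results. Because $g\in S$ satisfies $g(x_1+x_2)=\log(e^{x_1}e^{x_2})$, the pentagon Lemma gives $\Phi_g\in\SAut_3$. By the Proposition, $C$ restricts to a bijection $\SAut_3\to\Omega^{1,\text{closed}}$, so $C(\Phi_g)$ is closed. Combining this with the identification of the previous paragraph yields $d\Delta C(g)=dC(\Phi_g)=0$, as required.

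The main obstacle is the middle step: faithfully matching $\Delta C(g)$ with $C(\Phi_g)$. The cocycle identity introduces positive-degree $\TAut_3$-actions in front of each pulled-back $C(g)$, and one must verify that these translates, together with the ordering of the inverted and non-inverted faces, reproduce exactly the combination encoded in $\Delta$ — or, failing an exact match, that $\Delta C(g)$ and $C(\Phi_g)$ differ only by a closed form. Either way, the key point is that $d$ commutes with every $f^{*}$, so the action terms cannot spoil closedness. Once this identification is secured, the pentagon Lemma and the Proposition finish the argument immediately.
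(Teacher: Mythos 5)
Your overall strategy coincides with the paper's: relate $\Delta C(g)$ to $C(\Phi_g)$ via the group cocycle identity, then invoke the pentagon Lemma ($\Phi_g\in\SAut_3$) and the fact that $C$ maps $\SAut_3$ into $\Omega^{1,\text{closed}}_3$. Your preliminary step (group-level naturality $C(f^*g)=f^*C(g)$) and your expansion of $C(\Phi_g)$ are sound; carried out, they give
\[
(g^{1,2}g^{12,3}).C(\Phi_g) \;=\; C(g^{2,3}) + g^{2,3}.C(g^{1,23}) - C(g^{1,2}) - g^{1,2}.C(g^{12,3}),
\]
so your claimed identity $\Delta C(g)=C(\Phi_g)$ should in any case be corrected to $\Delta C(g)=(g^{1,2}g^{12,3}).C(\Phi_g)$ — a harmless discrepancy, since the $\TAut_3$-action commutes with $d$ and preserves closed forms.

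The genuine gap is the step you yourself flag as ``the main obstacle,'' and your fallback does not close it. The differential $\Delta$ is not defined as a combination of pullbacks decorated by group actions: its terms involve Baker--Campbell--Hausdorff substitutions, namely $\omega(x_1,\log(e^{x_2}e^{x_3}))$ and $\omega(\log(e^{x_1}e^{x_2}),x_3)$ for $\omega=C(g)$. To match the right-hand side above with $\Delta\omega$ you must convert the translated pullbacks into BCH-substituted ones,
\[
g^{2,3}.\omega^{1,23} = \omega(x_1,\log(e^{x_2}e^{x_3})), \qquad
g^{1,2}.\omega^{12,3} = \omega(\log(e^{x_1}e^{x_2}),x_3),
\]
and this is exactly where the hypothesis $g\in S$, i.e.\ $g(x_1+x_2)=\log(e^{x_1}e^{x_2})$, must be used a \emph{second} time — your proposal uses it only once, to obtain $\Phi_g\in\SAut_3$. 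Without this conversion, the BCH-defined quantity $\Delta C(g)$ has no a priori relation to $C(\Phi_g)$ at all; your fallback (``they differ only by a closed form'') is an unproven assertion, and the observation that $d$ commutes with every $f^*$ (or with the $\TAut_3$-action) only shows that the cocycle expansion is closed, not that $\Delta C(g)$ equals it. Once the two displayed identities are inserted, your argument becomes the paper's proof and is complete.
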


\begin{proof}
Since $g \in S$ is a solution of the first Kashiwara-Vergne equation, its associator \eqref{eq:associator} satisfies $\Phi_g \in \SAut_3$. Set $\omega = C(g)$ and apply the cocycle $C$ to the equation $g^{2,3} g^{1,23} = g^{1,2} g^{12,3} \Phi_g$ to get
\begin{align*}
0 &= C(g^{2,3} g^{1,23}) - C(g^{1,2} g^{12,3}) \\
&= (\omega^{2,3} + g^{2,3}. \omega^{1,23}) - (\omega^{1,2} + g^{1,2}. \omega^{12,3} + (g^{1,2} g^{12,3}). C(\Phi_g)) \\
&=  (\omega(x_2, x_3) + g^{2,3}. \omega(x_1, x_2 + x_3)) - (\omega(x_1, x_2) + g^{1,2}. \omega(x_1 + x_2, x_3) + (g^{1,2} g^{12,3}). C(\Phi_g)) \\
&= (\omega(x_2, x_3) + \omega( x_1, \log(e^{x_2}e^{x_3}))) - (\omega(x_1, x_2) + \omega( \log(e^{x_1}e^{x_2}), x_3) + (g^{1,2}g^{12,3}).C(\Phi_g)) \\
 &=   \Delta(\omega) - (g^{1,2}g^{12,3}).C(\Phi_g).
\end{align*}
Note that the universal 1-form  $(g^{1,2}g^{12,3}).C(\Phi_g)$ is closed  since $C$ maps $\Phi_g \in \SAut_3$ to $\Omega^{1, {\rm closed}}_3$, and $\Omega^{1, {\rm closed}}$  is preserved by every automorphism (in particular, by $g^{1,2}g^{12,3}$).
Hence, 
\begin{equation} \label{eq:ggdC}
d \Delta \omega = d \big( (g^{1,2}g^{12,3}).C(\Phi_g) \big) = (g^{1,2}g^{12,3}). dC(\Phi_g) =0.
\end{equation}
\end{proof}

For $g \in S$, Lemma \ref{lem:g_in_S} implies that the element $\omega_1 =C(g) \in \Omega^1\langle x_1, x_2\rangle$ verifies the conditions of Theorem \ref{thm:ascent}.
Hence, there is a unique $\omega=\omega_0+\omega_1 + \omega_2 + \omega_3$ such that its components are solutions of the descent equations.
In particular, $\omega_0 \in \Omega^0\langle x_1, x_2, x_3 \rangle$ is a $\Delta$-cocycle.
By Lemma \ref{lem:cohgen} the cohomology $H^3(\Omega^0, \Delta)$ is spanned by the generator $\phi= \langle x_1, [x_2, x_3]\rangle + \dots$. The following lemma shows that the cohomology class $[\omega_0] \in H^3(\Omega^0, \Delta)$ is independent of the choice of $g \in S$:
\begin{lem}
The cohomology class $[\omega_0] \in H^3(\Omega^0, \Delta)$ is independent of $g$. More precisely,
\begin{align*}
    [\omega_0]=\frac{1}{12} [\phi].
\end{align*}
\end{lem}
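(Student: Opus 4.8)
The plan is to reduce both assertions to a single explicit description of $\omega_0$. By Lemma \ref{lem:g_in_S} the $1$-form $\Delta\omega_1 = \Delta C(g)$ is closed, so the bottom descent equation $d\omega_0 = \Delta\omega_1$ can be solved by inverting $d$ with the homotopy $e$ (with $e(dx_i)=x_i$) used in the proof of the bijectivity of $c$. Indeed, for a closed homogeneous $1$-form $\beta$ of generator-degree $m$ one has, using the Euler operator $n=de+ed$, that $n\beta = de\,\beta = m\beta$, hence $\beta = d\big(\tfrac1m e\beta\big)$. Applying this degree by degree gives
\[
\omega_0 = \tfrac1m\, e\big(\Delta C(g)\big)\quad\text{on the generator-degree-}m\text{ part.}
\]
Since $\phi$ is cubic and $H^3(\Omega^0,\Delta)$ is one-dimensional by Lemma \ref{lem:cohgen}, every component of $\omega_0$ of generator-degree $\neq 3$ is automatically $\Delta$-exact, so only the cubic part $\omega_0=\tfrac13 e\big(\Delta C(g)\big)$ contributes to $[\omega_0]$.

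First I would establish independence of $g$. Any two elements $g,g'\in S$ differ by $g'=gh$ with $h=g^{-1}g'\in\SAut_2$, and the cocycle property gives $C(g')=C(g)+g.C(h)$. As $h\in\SAut_2$, the form $C(h)$ is closed, hence $C(h)=d\eta$ for some $\eta\in\Omega^0\langle x_1,x_2\rangle$ of positive degree. Since the $\TAut_2$-action commutes with both $d$ and $\Delta$, we get $\Delta C(g')=\Delta C(g)+d\,\Delta(g.\eta)$. Feeding this into the formula above and using that for a $0$-form $F$ one has $eF=0$, so $e\,dF = nF = (\deg F)\,F$, the extra contribution to the cubic $\omega_0$ is exactly $\Delta(g.\eta)$, which is $\Delta$-exact. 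Hence $[\omega_0']=[\omega_0]$.

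It remains to pin down the constant. By independence I may take any convenient representative, truncated to cubic order: writing $g=\exp(\rho(u))$ and solving $g(x_1+x_2)=\log(e^{x_1}e^{x_2})$ order by order gives, for instance,
\[
u^{(1)}=\big(-\tfrac12 x_2,\,0\big),\qquad u^{(2)}=\big(-\tfrac1{12}[x_1,x_2],\,-\tfrac1{24}[x_1,x_2]\big).
\]
From $C(e^u)=\big(\tfrac{e^u-1}{u}\big).c(u)=c(u)+\tfrac12\rho(u).c(u)+\cdots$ one extracts the cubic part $\omega_1^{(3)}=c(u^{(2)})+\tfrac12\,\rho(u^{(1)})c(u^{(1)})$ as an explicit cyclic $1$-form in $x_1,x_2$. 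I would then apply $\Delta$ using the Baker--Campbell--Hausdorff cofaces recorded in the proof of Lemma \ref{lem:g_in_S}, apply $\tfrac13 e$, and reduce the resulting cubic $0$-form modulo the image $\Delta\big(\Omega^0\langle x_1,x_2\rangle\big)$, comparing with the normalized generator $\phi=\langle x_1,[x_2,x_3]\rangle+\cdots$ to read off the coefficient $\tfrac1{12}$.

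The main obstacle is the cubic cyclic-word bookkeeping in this last step. Two sources feed into $\Delta\omega_1^{(3)}$: the additive merge acting on $\omega_1^{(3)}$, and the quadratic BCH term $\tfrac12[x_i,x_{i+1}]$ acting on the degree-two part $\omega_1^{(2)}=-\tfrac12\langle x_1,dx_2\rangle$. One must then project correctly onto the one-dimensional $H^3$, i.e. work modulo the $\Delta$-exact cubic words, which is a small but sign-sensitive linear-algebra computation in the four-dimensional space of cubic cyclic words in two letters. The surviving coefficient is precisely the cubic BCH coefficient $\tfrac1{12}$, which serves as a useful consistency check on the normalization of $\phi$.
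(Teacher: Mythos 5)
Your proposal splits into the same two halves as the paper's proof, and the first half is essentially correct and matches the paper: the independence of $[\omega_0]$ from $g$ rests on $S$ being a torsor under $\SAut_2$, the cocycle identity $C(g\circ h) = C(g) + g.C(h)$, and the fact that $C(h)$ is closed and hence exact for $h \in \SAut_2$. Your packaging of this through the homotopy operator $e$ (so that the discrepancy in $\omega_0$ becomes $\tfrac1m e\bigl(d\Delta(g.\eta)\bigr) = \Delta(g.\eta)$ degree by degree) is a harmless variant of the paper's more direct step "the kernel of $d$ in degree $0$ is trivial, hence $\omega_0(g') - \omega_0(g) = \Delta\nu$". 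Your explicit low-order solution is also correct: with the paper's conventions, $u^{(1)} = (-\tfrac12 x_2, 0)$ and $u^{(2)} = (-\tfrac1{12}[x_1,x_2], -\tfrac1{24}[x_1,x_2])$ do solve the first Kashiwara--Vergne equation through degree $3$, and your identification of the two contributions to the cubic part of $\Delta\omega_1$ (the additive merge acting on $\omega_1^{(3)}$ and the quadratic BCH correction acting on $\omega_1^{(2)}$) is the right bookkeeping.

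The genuine gap is that the second half is a plan, not a proof: the coefficient $\tfrac1{12}$ is never derived. Your argument ends by asserting that "the surviving coefficient is precisely the cubic BCH coefficient $\tfrac1{12}$", but that is exactly the claim to be established; nothing in the proposal certifies that the sign-sensitive reduction modulo $\Delta$-exact cubic words yields $\tfrac1{12}$ rather than, say, $-\tfrac1{12}$ or $\tfrac1{24}$. The paper sidesteps this two-variable computation entirely by exploiting the independence just proven: it chooses $g$ so that $\Phi_g$ is a Drinfeld associator, invokes equation \eqref{eq:ggdC} to get $d\omega_0 = (g^{1,2}g^{12,3}).C(\Phi_g)$, and quotes from \cite{AT} the degree-two term $u_2$ of $\log \Phi_g$, after which $d\omega_0 = c(u_2) + \dots = \tfrac1{12}\, d\langle x_1, [x_2,x_3]\rangle + \dots$ is a one-line cyclic-word identity in three variables. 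If you wish to keep your route, you must actually expand $\Delta\omega_1^{(3)}$, apply $\tfrac13 e$, and exhibit the result as $\tfrac1{12}\langle x_1,[x_2,x_3]\rangle$ plus an explicit element of the image of $\Delta$; note also that since $\Delta$ involves the BCH series it does not preserve generator degree, so your claim that "every component of $\omega_0$ of generator-degree $\neq 3$ is automatically $\Delta$-exact" needs the filtration/leading-term argument to be spelled out (the paper's own leading-term matching is informal on the same point, but it at least produces the leading term honestly).
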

\begin{proof}
Recall that $S$ is a right torsor under the action of the group $\SAut_2$. If one chooses  a base point $g \in S$, all other solutions of the first Kashiwara-Vergne equation are of the form $g'=g\circ f$ for $f \in \SAut_2$.  The cocycle condition for $C$  yields
$$
C(g \circ f) - C(g) = g.C(f).
$$
The 1-form $C(f)$ is closed (since $f \in \SAut_2$) and therefore exact. Hence, $g.C(f)$ is also exact, and we will denote it by $d \nu$. Applying the differential $\Delta$ to the equation above, we obtain
$$
\Delta C(g \circ f) - \Delta C(g) = \Delta d \nu = d \Delta \nu.
$$
This implies
$$
d\omega_0(g') - d\omega_0(g)= \Delta C(g') - \Delta C(g) = d \Delta \nu.
$$
Since the kernel of $d$ in degree 0 is trivial, we obtain 
$$
\omega_0(g') - \omega_0(g) = \Delta \nu,
$$
as required.

In order to compute the missing coefficient, we use equation \eqref{eq:ggdC} to conclude
$$
d\omega_0 =  \Delta \omega_1 =  (g^{1,2} g^{12,3}).C(\Phi_g).
$$
If $\Phi_g$ is a Drinfeld associator ({\em e.g.} the Knizhnik-Zamolodchikov associator), we have
$\Phi_g=\exp(u_2 + \dots)$, where $u_2 \in \tder_3$ is the following tangential derivation of degree 2:
$$
u_2 = \frac{1}{24} \, ([x_2, x_3], [x_3, x_1], [x_2, x_3])
$$
and $\dots$ stand for higher degree terms (see Propositions 7.3 and 7.4 in \cite{AT}).
Up to degree 2, the equation for $\omega_0$ reads
$$
\begin{array}{lll}
d\omega_0 & = & c(u_2) + \dots \\
 &= & 
 \frac{1}{24} \, \left( \langle d [x_2, x_3], x_1\rangle + \langle d [x_3, x_1], x_2\rangle +
\langle d [x_1, x_2], x_3 \rangle \right) + \dots \\
& =&   \frac{1}{12} \, d \langle x_1, [x_2, x_3] \rangle + \cdots
\end{array}
$$
which implies
$$
\omega_0 = \frac{1}{12} \, \langle x_1, [x_2, x_3] \rangle + \cdots ,
$$
as required.
\end{proof}

\begin{pro}  \label{prop:gomega}
The cocycle $C$ defines a bijection
\begin{align*}
    S & = \{g\in \TAut_2 \ |\ g(x_1+x_2)=\log (e^{x_1}e^{x_2})\}  \\
    & \overset{C}{\longrightarrow} \{ \omega^1 \in \Omega^1\langle x_1, x_2 \rangle \ | \ \Delta(\omega^1) = d\omega_0, [\omega_0] = \tfrac{1}{12}[\phi] \in H^3(\mathcal{F}\langle x_\bullet \rangle, \Delta) \}.
\end{align*}
\end{pro}

\begin{proof}
The cocycle $C$ is an injective map. Hence, so is its restriction to $S$. It remains to show the surjectivity.

Let $\omega_1, \tilde{\omega}_1 \in \Omega^1\langle x_1, x_2\rangle$ be two elements which satisfy conditions of the Proposition, and assume that $\omega_1 = C(g)$ for some $g \in S$. The corresponding elements $\omega_0, \tilde{\omega}_0 \in \Omega^0\langle x_1, x_2\rangle$ belong to the same cohomology class in $H^3(\Omega^0, \Delta)$, and therefore $\tilde{\omega}_0 - \omega_0 = \Delta \mu$ for some $\mu \in \Omega^0\langle x_1, x_2\rangle$. This implies
$$
\Delta(\tilde{\omega}_1 - \omega_1)=d(\tilde{\omega}_0 - \omega_0) = d \Delta \mu = \Delta d\mu.
$$ 

We know that the kernel of  $\Delta: \Omega^1\langle A, x_1, x_2\rangle \to \Omega^1\langle A,  x_1, x_2, x_3\rangle$ is spanned by the elements $ \Delta d \langle x_1, x_1\rangle, \Delta \langle A, x_1\rangle$. The first of them is $d$-exact, and the second one contains $A$. Therefore, the kernel of the map $\Delta: \Omega^1\langle  x_1, x_2\rangle \to \Omega^1\langle  x_1, x_2, x_3\rangle$ is spanned by the $d$-exact element
$d \Delta \langle x_1, x_1\rangle$.

We conclude that $\tilde{\omega}_1 - \omega_1 = d \sigma$ is $d$-exact. Choosing $f \in \SAut_2$ such that $C(f) = d(g^{-1}.\sigma)=g^{-1}.d\sigma$, we obtain
$$
C(g \circ f) = C(g) + g.C(f)=\omega_1 + d\sigma = \tilde{\omega}_1,
$$
as required.
\end{proof}

\begin{rmk}
The proof furthermore shows that elements of $S$ are up to the action of $e^{\Bbbk d\langle x_1, x_1 \rangle}$ in bijection with the set of representatives of the cohomology class $\tfrac{1}{12}[\phi]$. Moreover, one can use results from \cite{AT} to conclude that there is a bijection between normalized associators in $\SAut$ and normalized solutions to $\Delta \omega_0 = 0$. 
\end{rmk}

Together with theorem \ref{thm:ascent} one concludes
\begin{thm}
Primitives of $\tfrac{1}{2} \langle F, F \rangle \in \Omega^\bullet\langle A, x_\bullet \rangle$ with respect to $d + \Delta$ are of the form $\omega_0 +  \omega_1 + \omega_2 + \omega_3$, where
\begin{align*}
    \omega_3 &= \tfrac{1}{2} CS(A)\\
    \omega_2 &= \tfrac{1}{2} WZ(A,e^{x_1}) + s d \langle A,x_1 \rangle\\
    \omega_1 &= -C(g) - s \Delta \langle A,x_1 \rangle
\end{align*}
for uniquely determined $g\in S$ and $s \in \Bbbk$. Moreover, $[\omega_0] = -\tfrac{1}{12} [\phi]$.
\end{thm}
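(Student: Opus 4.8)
The plan is to characterize all primitives of $\tfrac12\langle F,F\rangle$ under the total differential $D=d+\Delta$ by assembling the top-degree component first and descending, using the earlier structural results. I would begin by fixing the form-degree/word-degree bigrading on $\Omega^\bullet\langle A,x_\bullet\rangle$ so that $d$ raises form-degree and $\Delta$ raises the simplicial degree, and write the equation $D(\omega_0+\omega_1+\omega_2+\omega_3)=\tfrac12\langle F,F\rangle$ as the system of descent equations component by component. The top component involves only $A$ (no $x_i$), so it reduces to the classical statement that the primitive of $\tfrac12\langle F,F\rangle=\operatorname{tr}(F^2)/2$ is the Chern--Simons form; this pins down $\omega_3=\tfrac12 CS(A)$ up to a $d$-exact and $\Delta$-closed ambiguity, which I would track carefully.

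Next I would descend one step at a time. The equation relating $\omega_3$ and $\omega_2$ is $\Delta\omega_3 = d\omega_2$ (up to the source term's component), and the standard transgression identity identifies $\Delta CS(A)$ with $d$ of the Wess--Zumino term $WZ(A,e^{x_1})$; this fixes $\omega_2=\tfrac12 WZ(A,e^{x_1})$ modulo the kernel of $d$ in this bidegree, which by the computations in the excerpt is spanned by $d\langle A,x_1\rangle$, accounting for the free parameter $s\in\Bbbk$. The crucial middle step is the equation $\Delta\omega_2=d\omega_1$: here I would use that $\Delta WZ(A,e^{x_1})$ matches $d$ of the group cocycle $C(g)$ for $g\in S$, which is exactly the content built into the definition of $C$ and the torsor structure of $S$. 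This is where Proposition~\ref{prop:gomega} enters decisively: the admissible $\omega_1$ are precisely $-C(g)-s\,\Delta\langle A,x_1\rangle$ with $g\in S$, and the proposition's bijection guarantees that each such $\omega_1$ arises from a unique $g\in S$.

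Then I would close the descent with the bottom equation $\Delta\omega_1=d\omega_0$, where $\omega_0\in\Omega^0\langle x_1,x_2,x_3\rangle$ is forced to be a $\Delta$-cocycle; by the preceding lemma its class is $[\omega_0]=-\tfrac{1}{12}[\phi]$, the sign flip relative to the $+\tfrac1{12}$ computed earlier coming from the sign convention $\omega_1=-C(g)$. I would verify the signs by tracing $d\omega_0=-\Delta\omega_1=\Delta C(g)$ through equation~\eqref{eq:ggdC} and the degree-2 computation $c(u_2)=\tfrac1{12}d\langle x_1,[x_2,x_3]\rangle$. To establish uniqueness of $(g,s)$, I would argue that the total ambiguity in the descent is exactly the two-parameter family generated by the closed-but-exact $d\langle A,x_1\rangle$ direction (the parameter $s$) and the $\SAut_2$-torsor direction absorbed into the choice of $g$; injectivity of $C$ on $S$ and the explicit kernel computation in Proposition~\ref{prop:gomega} together pin these down.

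The main obstacle I anticipate is bookkeeping the ambiguities consistently across all four descent equations: each step determines its component only up to $(\ker d)\cap(\text{appropriate bidegree})$, and I must show that these local ambiguities assemble into exactly the stated two-parameter family $(g,s)$ rather than a larger space. Concretely, the delicate point is proving that the $s$-shift $\omega_2\mapsto\omega_2+s\,d\langle A,x_1\rangle$ propagates coherently to $\omega_1\mapsto\omega_1-s\,\Delta\langle A,x_1\rangle$ while leaving $\omega_0$ and $\omega_3$ untouched, which requires the identity $\Delta d\langle A,x_1\rangle=d\Delta\langle A,x_1\rangle$ together with the kernel description from Proposition~\ref{prop:gomega}. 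Once the ambiguity at each stage is matched against the kernel computations already in the text, the bijection with $S\times\Bbbk$ and the final cohomology class $[\omega_0]=-\tfrac1{12}[\phi]$ follow by combining Theorem~\ref{thm:ascent} with the lemmas above.
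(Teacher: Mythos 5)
Your route (top-down descent from $\omega_3$) is not the paper's route --- the paper goes middle-out: it takes $\omega_1 = C(g)$, invokes Theorem \ref{thm:ascent} to produce the unique tower $\omega_0+\omega_1+\omega_2+\omega_3$ with $D\omega = \lambda\langle F,F\rangle$, fixes $\lambda = \tfrac{1}{2}$ by an explicit degree-2 computation with $g=\exp(\tfrac{1}{2}(x_2,0)+\cdots)$, and then classifies \emph{all} primitives at once as an affine space over the $D$-exact elements of total degree 3. The difference of route would be fine, but your version has two genuine gaps. First, the middle step: you claim that ``$\Delta WZ(A,e^{x_1})$ matches $d$ of the group cocycle $C(g)$'' is ``exactly the content built into the definition of $C$.'' It is not. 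The cocycle $C$ is defined abstractly by integrating the Lie-algebra cocycle $c$, and a priori has nothing to do with the Wess--Zumino form; the identity tying $C(g)$ to the descent of $\tfrac{1}{2}WZ(A,e^{x_1})$ is precisely the nontrivial content of Theorem \ref{thm:ascent} combined with Lemma \ref{lem:g_in_S} (which only gives $d\Delta C(g)=0$). Moreover, Proposition \ref{prop:gomega} constrains $\omega_1$ through the equation \emph{below} it ($\Delta\omega_1 = d\omega_0$ and the class of $\omega_0$), not through the equation $d\omega_1 = \pm\Delta\omega_2$ \emph{above} it, so invoking it ``decisively'' at that step conflates two different statements.

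Second, and decisively, your ambiguity bookkeeping is wrong in scale. At the $\omega_2$ stage the ambiguity is the kernel of $d$ on $\Omega^2\langle A,x_1\rangle$, i.e.\ all closed (hence exact) 2-forms --- an infinite-dimensional space containing, e.g., $d\langle [A,x_1],x_1\rangle$ and $d\langle x_1,[x_1,dx_1]\rangle$ --- not the line $\Bbbk\, d\langle A,x_1\rangle$; no computation in the text says otherwise. The reason the final answer nonetheless depends on only the two parameters $(g,s)$ is the paper's identification of the space of $D$-exact elements of total degree 3 as $D(\Bbbk\, d\langle x_1,x_1\rangle \oplus \Bbbk\langle A,x_1\rangle) + D\,\Omega^0\langle x_1,x_2\rangle$: every seemingly new shift $D\beta$ already lies in this space, the $D\,\Omega^0\langle x_1,x_2\rangle$ and $\Bbbk\, d\Delta\langle x_1,x_1\rangle$ parts are absorbed into a change of $g$ via Proposition \ref{prop:gomega}, and the $\langle A,x_1\rangle$ direction gives $s$. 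This structural statement is exactly what your ``main obstacle'' paragraph needs and never supplies; deferring to ``the kernel computations already in the text'' does not close it, since the only such computation in this section (the kernel of $\Delta$ on $\Omega^1\langle A,x_1,x_2\rangle$, inside the proof of Proposition \ref{prop:gomega}) lives in a different bidegree and cannot control shifts of $\omega_3$ by $d\,\Omega^2\langle A\rangle$ or of $\omega_2$ by $d\,\Omega^1\langle A,x_1\rangle$. Until you prove that every total-degree-2 element of $\Omega^\bullet\langle A, x_\bullet\rangle$ lies in that small subspace plus $\ker D$, your argument establishes existence of primitives of the stated form but not that \emph{all} primitives have that form with $(g,s)$ unique.
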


\begin{proof}
By Proposition \ref{prop:gomega}, every element $g\in S$ gives rise to $\omega_1 = C(g)$ which satisfies conditions of Theorem \ref{thm:ascent}. Hence, it defines a unique element $\omega=\omega_0+\omega_1+\omega_2+\omega_3$ with the property $D\omega = \lambda \langle F, F \rangle$. 
Consider $g=\exp(u_1 + \dots)$, where $u_1=\frac{1}{2}(x_2, 0)$ is the tangential derivation of degree one, and $\dots$ stand for higher order terms. It is sufficient to keep the terms up to degree 2 which yields
$$
\omega_0=0 + \dots, \omega_1 = -\frac{1}{2}\, \langle x_1, dx_2 \rangle + \dots, \omega_2 = \frac{1}{2} \langle A, dx_1 \rangle + \dots, \omega_3 = \frac{1}{2} \langle A, dA\rangle + \cdots
$$
which shows that $\lambda = 1/2$, as required.

The  primitives of $\frac{1}{2} \langle F, F\rangle$ form an affine space over space of $D$-exact elements of total degree 3. This space is of the form
$$
D( \Bbbk d\langle x_1, x_1 \rangle \oplus \langle A, x_1\rangle) + D \Omega^0\langle x_1, x_2\rangle .
$$
The sum in not direct since $d\Delta \langle x_1, x_1\rangle$ spans the intersection of the two subspaces. We have already classified all solutions with $\omega_1 \in \Omega^1\langle x_1, x_2\rangle$. Hence, a general solution for $\omega_1 \in \Omega^1\langle A, x_1, x_2\rangle$ which verifies $d\Delta \omega_1 =0$ is of the form
$$
\omega_1 = -C(g) - s \Delta \langle A, x_1 \rangle .
$$
Formulas for $\omega_2$ and $\omega_3$ follow from the descent equations.
\end{proof}

\begin{rmk}
Let $g \in S$. Then, there is the following interesting identity relating the pentagon equation for $\Phi_g$ to the last descent equation $\Delta \omega_0=0$ for $\omega_0$ defined by $g$:
$$
g^{1,2}g^{12,3}g^{123,4}.\left( C(\Phi_g^{12,3,4}\Phi_g^{1,2,34}) - C(\Phi_g^{1,2,3} \Phi_g^{1,23,4}\Phi_g^{2,3,4})\right) = \Delta d\omega_0 = d \Delta \omega_0.
$$

\end{rmk}




\begin{thebibliography}{}

\bibitem{Alekseev-Severa}
A. Alekseev and P. {\v S}evera,  Equivariant cohomology and current algebras, {\em Confluentes Math.} 4, 2012, no. 2.

\bibitem{AT}
A. Alekseev, C. Torossian, The Kashiwara-Vergne conjecture and Drinfeld’s associators, {\em Annals of Math.} 175 (2012), issue 2, 415-463.


\bibitem{barnatan} D. Bar-Natan, Non-Associative Tangles, {\em{Geometric topology}}, pages 139--183, 1997, Proc. Georgia int. topology conf.

\bibitem{Bott} R. Bott, On the Chern-Weil homomorphism and the continuous cohomology of Lie-groups, {\em Advances in Math.} {\bf  11} (1973),  289--303.

\bibitem{BSS} R. Bott, H. Shulman, J. Stasheff, On the de Rham theory of certain classifying spaces, {\em Advances in Math.} {\bf 20} (1976), 43--56.

\bibitem{drinfeld} V. Drinfeld, On quasitriangular quasi-{H}opf algebras and on a group that is closely connected with {$\textit{Gal\/}(\overline{\mathbb{Q}}/{\mathbb{
  Q}})$}, {\em Algebra i Analiz}, 2(4):149--181, 1990.

\bibitem{Faddeev-Shatashvili}
L. Faddeev, S. Shatashvili, (1984), Algebraic and Hamiltonian methods in the theory of non-Abelian anomalies, {\em Theoretical and Mathematical Physics}, 60(2), 770–778.

\bibitem{Faddeev}
L. Faddeev, (1984), Operator anomaly for the Gauss law, {\em Physics Letters B}, 145(1), 81–84.

\bibitem{Guillemin-Sternberg}
V. Guillemin, S. Sternberg, Supersymmetry and equivariant de Rham theory, {\em Mathematicatical Past and Present}, Springer-Verlag, Berlin.

\bibitem{Jackiw} 
R. Jackiw, 3-Cocycles in Mathematics and Physics, {\em Phys. Rev. Lett.} 54 (1985) 159-162.

\bibitem{Kontsevich}
M. Kontsevich, Formal (non)-commutative symplectic geometry, {\em The Gelfand Mathematical. Seminars}, 1990-1992, Ed. L.Corwin, I.Gelfand, J.Lepowsky, Birkhauser 1993, 173-187.

\bibitem{Naef}
F. Naef, Poisson Brackets in Kontsevich's "Lie World", preprint arXiv:1608.08886.

\bibitem{Severa-Willwacher(2011)} P. Severa, T. Willwacher, The cubical complex of a permutation group representation - or however you want to call it, 2011, arXiv:1103.3283. 

\bibitem{Sheng-Xu-Zhu} Y. Sheng, X. Xu, C. Zhu, String principal bundles and transitive Courant algebroids, arXiv:1701.00959.

\bibitem{BS} H. Shulman, The double complex of $\Gamma\sb{k}$,  Differential geometry, {\em Proc. Sympos. Pure Math., Vol. XXVII} AMS, Providence, R.I. (1975),  313--314.

\bibitem{Stolz-Teichner}
S. Stolz and P. Teichner,   What is an elliptic object? \emph{Topology, geometry and quantum field theory}, 247-343, London Math. Soc. Lecture Note Ser., 308, Cambridge Univ. Press, Cambridge, 2004.


\bibitem{vergne} M. Vergne, A homotopy for a complex of free lie algebras, {\em Turkish J. Math.} 36, 2012, no. 1, 59-65.


\bibitem{Weibel}
C. Weibel,  An introduction to homological algebra, published 1994 by Cambridge Univ. Press (450pp.) Corrections to 1994 hardback edition.

\bibitem{Zumino}
B. Zumino, (1984), Chiral anomalies and differential geometry, {\em Current Algebra and Anomalies}.


\end{thebibliography}
\end{document}